\newcommand{\delem}[1]{#1}%^D
\newcommand{\celem}[1]{#1^c}
\newcommand{\net}{DPN\xspace}
\newcommand{\netname}{Data Petri net\xspace}
\newcommand{\cpn}{CPN\xspace}
\newcommand{\anet}{\mathcal{N}}
\newcommand{\acpn}{\celem{\anet}}
\newcommand{\acpnrep}{\celem{\restrict{\anet}}}
\newcommand{\anetrep}{\restrict{\anet}}
\newcommand{\svassignment}{SV assignment\xspace}
\def\post#1{\ensuremath{{#1}\kern-.05ex\bullet}}
\def\posto#1{\ensuremath{{#1}\kern-.05ex\circ}}
\newcommand{\true}{\small{\textit{\texttt{true}}}\xspace}
\newcommand{\false}{\small{\textit{\texttt{false}}}\xspace}
\newcommand{\DecTrans}{T_{dec}}
\newcommand{\PS}{pwr}
\newcommand{\MS}{\mathbb{B}}
\newcommand{\Vars}{V}
\newcommand{\VarsR}{\Vars^r}
\newcommand{\VarsW}{\Vars^w}
\newcommand{\varname}{v}
\newcommand{\typevar}{\var_\Domain}
\newcommand{\TypeVars}{\mathcal{V}}
\newcommand{\var}{v}
\newcommand{\varr}{\var^r}
\newcommand{\varw}{\var^w}
\newcommand{\varDomAss}{\mathit{dom}}
\newcommand{\varInit}{\varState_I}
\newcommand{\reads}{\mathit{read}}
\newcommand{\writes}{\mathit{write}}
\newcommand{\guard}{\mathit{guard}}
\newcommand{\DPN}{(P,\allowbreak T,\allowbreak F,\allowbreak \Vars, \allowbreak \varDomAss, \allowbreak \varInit, \allowbreak \reads, \allowbreak\writes,\allowbreak \guard)}
\newcommand{\DPNDELEM}{(\delem{P},\allowbreak \delem{T},\allowbreak \delem{F},\allowbreak \delem{\Vars},\allowbreak \varDomAss,\allowbreak \delem{\varInit}, \allowbreak \delem{\reads}, \allowbreak\delem{\writes},\allowbreak \delem{\guard})}
\newcommand{\varState}{\alpha}
\newcommand{\bind}{\beta}
\newcommand{\cpnbind}{\gamma}
\newcommand{\set}[1]{\{#1\}}                      % set
\newcommand{\tup}[1]{\langle #1\rangle}            % tuple
\newcommand{\card}[1]{|{#1}|}
\newcommand{\goto}[1]{\mathrel{\raisebox{-2pt}{$\xrightarrow{#1}$}}}
\newcommand{\Domain}{\mathcal{D}}
\newcommand{\Universe}{\mathcal{U}}
\newcommand{\varDomain}{\Delta_\Domain}
\newcommand{\Preds}{\Sigma_\Domain}
\newcommand{\domainName}{domain\xspace}
\newcommand{\domainNames}{domains\xspace}
\newcommand{\Integers}{\mathbb{Z}}
\newcommand{\Reals}{\mathbb{R}}
\newcommand{\Expr}[0]{\Phi}
\newcommand{\Consts}[1]{\mathcal{C}_{#1}}
\newcommand{\restrict}[1]{{#1}_{\Repres{}}}
\newcommand{\varStateR}{\varState_{\Repres{}}}
\newcommand{\Repres}[1]{\bar{\Delta}_{#1}}
\newcommand{\Bools}{\mathsf{bool}}
\newcommand{\UNITs}{\bullet}
\newcommand{\Strings}{\mathsf{Strings}}
\renewenvironment{proof}{\paragraph{Proof.}}{\hfill$\square$}
\begin{document}

\sloppy

\title{A Holistic Approach for Soundness Verification \\of Decision-Aware Process Models\\(extended version)}
\titlerunning{Soundness Verification of Decision-Aware Process Models\\ (Extended version)}

\author{Massimiliano de Leoni\inst{1} \and
 Paolo Felli\inst{2} \and
 Marco Montali\inst{2}}

\authorrunning{M.~de Leoni, P.~Felli, M.~Montali}

\institute{
  Eindhoven University of Technology, the Netherlands
  \\\email{m.d.leoni@tue.nl}
\and
  Free University of Bozen-Bolzano,
  \\\email{\{pfelli,montali\}@inf.unibz.it}
}

\maketitle

\begin{abstract}
The last decade has witnessed an increasing transformation in the design, engineering, and mining of processes, moving from a pure control-flow perspective to more integrated models where also data and decisions are explicitly considered.
This calls for methods and techniques able to ascertain the correctness of such integrated models. Differently from previous approaches, which mainly focused on the local interplay between decisions and their corresponding outgoing branches, we introduce a holistic approach to verify the end-to-end soundness of a Petri net-based process model, enriched with case data and decisions.
In particular, we present an effective, implemented technique that verifies soundness by translating the input net into a colored Petri net with bounded color domains, which can then be analyzed using conventional tools. We prove correctness and termination of this technique. In addition, we relate our contribution to recent results on decision-aware soundness, showing that our approach can be readily applied there.
\end{abstract}

% !TEX root = main.tex

\section{Introduction}
  The fundamental problem of verifying the correctness of business process models has been traditionally tackled by exclusively considering the control flow perspective. This means that correctness is assessed by only considering the ordering relations among activities present in the model. In this setting, one of the most investigated formal notions of correctness is that of \emph{soundness}, originally introduced by van der Aalst in the context of workflow nets (a special class of Petri nets that is suitable to capture the control flow of business processes) \cite{Aalst1998workflow}. Intuitively, soundness guarantees the two good properties of ``possibility of clean termination" and of ``absence of deadlocks". On the one hand, it ensures that whenever a process instance is being executed, it always has the possibility of reaching the completion of the process, and if it does so, then no running concurrent thread is still active in the process. On the other hand, it captures that all parts of the process can be executed in some scenario, that is, the process does not contain dead activities that are impossible to enact.

  The control-flow perspective is certainly of high importance as it can be considered the main process backbone; however, many other perspectives should also be taken into account. In fact, the last decade has witnessed an increasing transformation in the design, engineering, and mining of processes, moving from a pure control-flow perspective to more integrated models where also data and decisions are explicitly considered. The fact that the incorporation of decisions within   process models is gaining momentum is also testified by the recent introduction and development of the Decision Model and Notation (DMN), an OMG standard~\cite{DMN.2011}. This calls for methods and techniques able to ascertain the correctness of such integrated models, which is important not only during the design phase of the business process lifecycle, but also when it comes to decision and guard mining~\cite{Leoni.2013b}, as well as compliance checking~\cite{KLR10}.

Previous approaches to analyze correctness of decision-aware processes have typically focused on single decisions \cite{CDL16} or on  the local interplay between decisions and their corresponding outgoing branches \cite{Batoulis2017}. More recent efforts have tackled this locality problem by holistically considering soundness of the overall, end-to-end process in the presence of data and decisions, but have mainly stayed at the foundational level \cite{MonC16,BaHW17}. In particular, they do not come with actual techniques to effectively carry out the verification of soundness. In this work, we overcome this limitation by introducing a holistic, formal and operational approach to verify the end-to-end soundness of \netname{s} (\net{s}) \cite{deLeoni2014DPN}. \net{s} combine workflow nets with case data, decisions and conditional data updates, achieving a suitable balance between expressiveness and simplicity. Thanks to their solid formal foundation, \net{s} come with a clear execution semantics, and consequently allow us to unambiguously extend the notion of soundness to incorporate the decision perspective. In addition, they combine the main ingredients that are needed to formally capture conventional process modelling notations, such as the combination of BPMN control- and data-flow with DMN decisions. 

In the general case, verifying soundness of \net{s} is undecidable, due to the presence of case data and the possibility of manipulating them so as to reconstruct Turing-powerful computational devices. This applies, in particular, when case data can be updated using arithmetical operators. We isolate here a decidable class of \net{s} that employs both non-numerical and numerical domains, and is expressive enough to capture data-aware process models equipped with S-FEEL DMN decisions \cite{DMN.2011}, such as those recently proposed in \cite{Batoulis2017,BaHW17}.
Importantly, such \net{s} cannot be directly analyzed algorithmically: due to the presence of data and corresponding updates, they in fact induce a state space that may have infinitely many states even when the control-flow is expressed by a bounded workflow net.  To tame this infinity, we take inspiration from the technique of predicate abstraction \cite{CGL94}, and in particular the approach adopted in \cite{KLR10}, and present an effective technique that verifies soundness by translating the input net into a colored Petri net (\cpn) with bounded color domains, which induces a finite state space and can be consequently analyzed using conventional tools. This technique has been implemented as a plug-in of the well-established ProM process mining framework.

The paper is organized as follows. In Section~\ref{sec:related} we discuss related work; in Section~\ref{sec:syntax-semantics} we provide the necessary background on \net{s} and a precise formalisation of its execution semantics; In Section~\ref{sec:dmn} we discuss the relation between \net{s} and the DMN S-FEEL language; in Section~\ref{sec:verification} we illustrate an effective technique for translating a given \net into a special colored Petri net with bounded color domains and which can be thus studied using standard tools, and then finally prove that we can analyze it to assess the properties of the original \net, including soundness. 
Section~\ref{sec:implementation} discusses the ProM implementation and reports on a number of experiments based on models of real-life processes, some of which were designed by hand and others were a combination of hand-design and of process discovery. Experiments show that the technique is operationally effective and  can be applied to real-life case studies. Finally, Section~\ref{sec:conclusion} concludes the paper and delineates avenues of future research.

\section{Related Work}
\label{sec:related}

Within the field of database theory, many approaches have been proposed to formalize and verify very sophisticated variants of data-aware processes \cite{CaDM13}, also considering data-aware extensions of soundness \cite{MonC16}. However, such works are mainly foundational, and do not currently come with effective verification algorithms and implementations.
Within the field of business process management and information systems engineering, a plethora of techniques and tools exists for verifying soundness of process models that only capture the control-flow perspective, but not much research has been carried out to incorporate the data and decision perspective in the analysis. Sadiq et al.~\cite{Sadiq:2004:DFV:1012294.1012317} were among the first to acknowledge the importance of incorporating the data perspective within soundness analysis, but they did not propose any technique to carry out the verification. Sidorova et al.\ proposed a conceptual extension of workflow nets, equipping them with an abstract, high-level data model~\cite{Sidorova:2010:WSR:1883784.1883835,Sidorova:2011:SVC:1994485.1995288}. In this approach, data are captured abstractly, and it is assumed that activities read and write entire guards, instead of reading and writing data variables that affect the satisfaction of guards. This abstract approach certainly simplifies the analysis because reading and writing guards is equivalent to reading and writing boolean values, which corresponds to a sort of a-priori propositionalization of the data.
This is, however,  not realistic: as testified by modern process modeling notations, such as BPMN and DMN, the data perspective requires (at least) to have data variables and full-fledged guards and updates over them.
\cite{CDL16} focuses on single DMN decision, in particular verifying whether a DMN table is correct, or contains instead inconsistencies, missing and overlapping rules. This certainly fits in the context of data-aware soundness,  but it is only a minor portion of it, since the analysis is only conducted locally to decision points in the process, and local forms of analysis do not suffice to guarantee good behavioral properties of the entire process \cite{KLR10}. A similar drawback is also present in~\cite{Batoulis2017}, where the contribution of decisions in the verification of soundness is also local, and limits itself to the interaction between decisions and their immediate outgoing sequence flows.
As mentioned in the introduction, soundness verification plays a key role in decision and guard mining~\cite{Leoni.2013b}. In this setting, an initial process model is discovered by solely considering the control-flow perspective. In a second phase, decision points present in the model are enriched with decisions and conditions again inferred from the event data present in the log. This ``local enrichment" does not guarantee that the overall model is indeed sound, so soundness verification techniques should be inserted in the loop to discard incorrect results and guide discovery towards the extraction of models that are correct by design.
%
% where the initial, discovered model does not contain the decision perspective and the latter must be discovered by analyzing the event-log data. Techniques for decision and guard mining do not generally guarantee the soundness because guards are discovered locally: models can still reach deadlocks.

The two closest works to our contribution are \cite{BaHW17,KLR10}. In \cite{BaHW17}, the authors consider the interplay between BPMN and DMN, providing different notions of data-aware soundness on top of such process models (once the BPMN component is encoded into a Petri net, which can be seamlessly tackled by known techniques \cite{DDO08,Kalenkova:2016:PMU:2976767.2987688}). As shown in Section~\ref{sec:dmn}, our approach is expressive enough to capture the process models studied in \cite{BaHW17}. In addition, our verification technique based on an encoding into \cpn{s} does not only preserve the notion of soundness we introduce, but it actually guarantees that the obtained \cpn is behaviorally equivalent to the input \net. This, in turn, implies that all variants of soundness defined in \cite{BaHW17} can be actually verified using this approach.

In \cite{KLR10}, the authors introduce an abstraction approach that shares the same spirit of our technique: it is faithful (i.e., it preserves properties), and it is based on the idea of considering only boundedly many representative values in place of the entire data domains. There are however four fundamental differences between our setting and that of \cite{KLR10}. First of all, in  \cite{KLR10} abstractions are used to shrink the state space of the analysis, while in our case they are employed to tame the infinity brought by the presence of data and the possibility of updating them. Second, \cite{KLR10} defines abstract process graphs that do not come with a formal execution semantics, and consequently do not allow one to formally prove that the abstraction technique is indeed correct. Since our approach is expressive enough to capture the model of \cite{KLR10} (see Section~\ref{sec:encoding}), our correctness result captured in Section~\ref{sec:translation-correctness} can be actually lifted to \cite{KLR10} as well. Third, \cite{KLR10} focuses on compliance checking against LTL-based compliance rules, which are unable to capture soundness (in particular, the ``possibility of termination", which has an intrinsic branching nature); on the other hand, since our encoding produces a \cpn that is behaviorally equivalent to the original \net, it also preserves all the runs and, in turn, all LTL properties. Finally, while \cite{KLR10} translates the problem of compliance checking into a temporal model checking problem, we resort to Petri net-based techniques. 
%!TEX root = main.tex

\section{Syntax and Semantics of \net{s}}
\label{sec:syntax-semantics}

We provide the necessary background on the \net model \cite{deLeoni2014DPN}, precisely defining its execution semantics and introducing a running example. We then lift the standard notion of soundness to the more sophisticated setting of \net{s}.
%
%\subsection{The \net Model}
%\label{sec:model}
Assume an infinite universe of possible values $\Universe$.

\begin{definition}[Domain]
A \emph{domain} \domainName is a couple $\Domain=\tup{\varDomain,\Preds}$ where $\varDomain\subseteq \Universe$ is a  set of possible values and $\Preds$ is the set of binary predicates on $\varDomain$.
\end{definition}

We consider a set of domains $\mathfrak{D}$, and in particular the notable \domainNames $\Domain_{\Reals} = \tup{\Reals, \set{<,>,=,\neq}}$, $\Domain_{\Integers} = \tup{\Integers, \set{<,>,=,\neq}}$, $\Domain_{\mathit{bool}}=\tup{\set{True,False},\set{=,\neq}}$, $\Domain_{\mathit{string}}=\tup{\mathbb{S},\set{=}}$ which, respectively, account for real numbers, integers, booleans, and strings ($\mathbb{S}$ denotes here the infinite set of all strings).

Consider a set $\Vars$ of \emph{variables}.
Given a variable $\var\in\Vars$ write $\varr$ or $\varw$ to denote that the variable $\var$ is read or written, hence we consider two distinct sets $\VarsR$ and $\VarsW$ defined as $\VarsR = \set{ \var^r \mid \var\in\Vars }$ and $\VarsW = \set{ \var^w \mid \var\in\Vars }$.
When we do need to distinguish, we still use the symbol $\var$ to denote any member of $(\VarsR\cup\VarsW)$.
To talk about the possible values variables may assume, we need to associate domains to variables. If a variable $\var$ is assigned a domain $\Domain=\tup{\varDomain,\Preds}$, for brevity we denote by $\typevar$ the corresponding \emph{typed variable}, that is a shorthand to specify that $\var$ can only assume values in $\varDomain$.

Variables provide the basic building block to define logical conditions (formally, guards) on data.

\begin{definition}[Guards]
Given a set of typed variables $\TypeVars$ for a set $\Vars$, the set of possible \emph{guards} $\Expr(\Vars)$  is the largest set containing the following:
\begin{compactitem}
\item $\var$ iff $\var \in (\VarsR\cup\VarsW)$;
\item $\typevar \odot \varDomain$  iff $\var \in (\VarsR\cup\VarsW)$ and $\odot \in \Preds$;
\item $\phi_1 \land \phi_2$ and $\phi_1 \lor \phi_2$ iff $\phi_1$ and $\phi_2$ are guards in $\Expr(\Vars)$.
\end{compactitem}
\label{def:guards}
\end{definition}

A \emph{variable assignment} is a function $\bind :  (\VarsR\cup\VarsW) \rightarrow \Universe\cup\set{\bot}$, which assigns a value to read and written variables, with the restriction that $\bind(\var)$ is a possible value for $\var$, that is if $\typevar$ is the corresponding typed variable then $\bind(\var)\in\varDomain$. The symbol $\bot$ is used to denote an undefined value, i.e., that the variable is not set.
%Also, an assignment can be \emph{total} or \emph{partial}, depending whether it is defined for the entire set $\Vars$ or only a proper subset.
%
Given a variable assignment $\bind$ and a guard $\phi$, we say that $\phi$ evaluates to true when variables are substituted as per $\bind$, written $\phi_{[\bind]}=\true$, iff:
\begin{compactitem}
\item if $\phi=\var$ then $\bot\neq\bind(\var)$;
\item if $\phi=\var \odot k$, then $\odot(x , k)$ for $x =\bind(\var)$;
\item if $\phi = \phi_1 \land \phi_2$ then $\phi_{1[\bind]}=\true$ and $\phi_{2[\bind]}=\true$;
\item if $\phi = \phi_1 \lor \phi_2$ then $\phi_{1[\bind]}=\true$ or $\phi_{2[\bind]}=\true$.
\end{compactitem}

\medskip
In words, a guard is satisfied by evaluating it after assigning values to read and written variables, as specified by $\bind$. We can now define our \net{s}.

A \emph{state variable assignment}, abbreviated hereafter as \svassignment, is instead a function $\varState: \Vars \rightarrow \Universe\cup\set{\bot}$, which assigns values to each variable $\var\in \Vars$, with the restriction that $\varState(\typevar) \in \varDomain$. Note that this is different from variable assignments, which are defined over $(\VarsR\cup\VarsW)$. We can now define \net{s}.

\begin{definition}[Data Petri Net]
Let $\Vars$ be the set of process variables.
A \emph{Data Petri Net (\net)} $\anet=\DPN$ is a Petri net $(P,T,F)$ with additional components, used to describe the additional perspectives of the process model:
\begin{compactitem}
%\item $(P,T,F)$ is a Petri net;
\item $\Vars$ is a finite set of process variables;
\item $\varDomAss$ is a function assigning a domain $\Domain$ to each $\var\in\Vars$;
\item $\varInit$ is the initial \svassignment;
\item $\reads: T \rightarrow \PS(\Vars)$ returns the set of variable \emph{read} by a transition;
\item $\writes: T \rightarrow \PS(\Vars)$ returns the set of variable \emph{written} by a transition;
\item $\guard: T \rightarrow \Expr(\Vars)$ returns a guard  associated with the transition, so that $\varr$ appears in $\guard(t)$ only if $\varname\in \reads(t)$, and $\varw$ appears in $\guard(t)$ only if $\varname\in \writes(t)$, for every $t$. %We assume $\reads(t)\cap\writes(t)=\emptyset$ for any $t$.
\end{compactitem}
\end{definition}

\subsection{Execution Semantics}
\label{sec:semantics}
By considering the usual semantics for the underlying Petri net together with the guards associated to each of its transitions, we define the resulting execution semantics for \net{s}. % as follows. First, we give the notion of state of a \net.
First,
%\begin{definition}[State of a \net]
let $\anet$ as above be a \net. Then the set of possible states of $\anet$ is formed by all pairs $(M,\varState)$ where:
\begin{compactitem}
	\item $M \in \MS(P)$\footnote{$\MS(X)$ indicates the set of all multisets of elements of $X$}, i.e., is the marking of the Petri net $(P,T,F)$, and
	\item $\varState$ is a \svassignment, defined as in the previous section.
	\end{compactitem}
%\end{definition}

In any state, zero or more transitions of a \net may be able to fire. Firing a transition updates the marking,
% $t \in T$ moves tokens from the input places $\pre{t}$ to the output places $\post{t}$
reads the variables specified in $\reads(t)$ and selects a new, suitable value for those  in $\writes(t)$.
We model this through a variable assignment $\bind$ for the transition, which assigns a value to all and only those variables that are  read or written.
A pair $(t,\bind)$ is called \emph{transition firing}. %, where $\bind$ is a variable assignment
%which is undefined whenever $\writes(t)=\reads(t)=\emptyset$.
%%
%Through a transition firing, the \net evolves from state $(M,\varState)$ to the next state $(M',\varState')$, as described below.

\begin{definition}[Legal transition firing]
A \net $\anet=\DPN$ evolves from state $(M,\varState)$ to state $(M',\varState')$ via the transition firing $(t,\bind)$ with $\guard(t)=\phi$ iff:
\begin{compactitem}
\item $\bind(\varr) = \alpha(\var)$ if $\varname\in\reads(t)$: $\bind$ assigns values as $\alpha$ for read variables;
\item the new \svassignment is s.t. $\varState'(\var) = \left\{\begin{array}{lr}
\varState(\var) & \text{if} \; \varname\not\in \writes(t),\\
\bind(\varw) & \text{otherwise};\\
\end{array}\right.$\\
namely the new \svassignment $\varState'$ is as $\varState$ but updated as per $\bind$;
\item $\bind$ is \emph{valid}, namely $\phi_{[\bind]}=\true$: the guard is satisfied under $\bind$;
\item each input place of $t$ contains at least one token: $(M(p) > 0)$ for any $p \in P. (p,t) \in F$.
%\item the new marking is calculated according to the Petri net execution semantics, namely $M [ t \rangle M'$.
\item the new marking is calculated as usual, namely $M \goto{t} M'$.
\end{compactitem}
\end{definition}

We denote this by writing $(M,\varState) \goto{t,\bind} (M',\varState')$. We extend this to sequences $\sigma = \tup{(t^1,\bind^1),\ldots, (t^n,\bind^n)}$ of $n$ legal transition firings, called \emph{traces}, an denote the corresponding \emph{run} by
$(M^0,\varState^0) \goto{t^1,\bind^1} (M^1,\varState^1) \goto{t^2,\bind^2} \ldots \goto{t^n,\bind^n} (M^n,\varState^n)$ or equivalently by $(M^0,\varState^0)\goto{\sigma}(M^n,\varState^n)$.
By restricting to the initial marking $M_I$ of a \net $\anet$ together with the initial variable assignment $\varState_I$, we define the process traces of $\anet$ as the set of sequences $\sigma$ as above, of any length, such that $(M_I,\varState_I) \goto{\sigma} (M,\varState)$ for some $M\in \MS(P)$ and $\alpha$, and the \emph{trace set} of $\anet$ as the set of process traces $\sigma$ such that $(M_I,\varState_I) \goto{\sigma} (M_F,\varState)$ for some $\alpha$, where $M_F$ is the final marking of $\anet$.

%The trace set of a \net $\anet$ represents the \emph{behaviour} of $\anet$: all possible process traces $\sigma$ starting from an initial state and ending in any final state, i.e., states in which the final marking of the Petri net is reached regardless of the variable values prescribed by the last variable assignment $\varState$.

% !TEX root = main.tex

\newcommand{\cguard}[1]{\ensuremath{\left[#1\right]}}
\newcommand{\cwrite}[1]{\textbf{write}:\ensuremath{#1}}

\newcommand{\cval}[1]{\mathtt{#1}}

\newcommand{\varok}{\mathit{ok}}
\newcommand{\varamount}{\mathit{amount}}

\newcommand{\readvar}[1]{#1^r}
\newcommand{\writevar}[1]{#1^w}

\tikzstyle{place} = [
  circle,
  very thick,
  draw=black,
  fill=white,
  minimum size=8mm,
  font=\fontsize{9}{144}\selectfont
]

\tikzstyle{transition} = [
  rectangle,
  very thick,
  draw=black,
  fill=white,
  minimum width=2cm,
  minimum height=1cm,
]

\tikzstyle{htransition} = [
  transition,
  fill=white,
  minimum width=8mm,
  minimum height=8mm,
]

\begin{figure}[t]
\centering
\resizebox{\textwidth}{!}{
\begin{tikzpicture}
  [ ->,
    >=stealth',
    auto,
    very thick,
    font=\sffamily,
    node distance=5mm
  ]

  \node[place, label={$i$}] (pcr) {};

  \node[transition, right= of pcr] (cr) {
    \begin{tabular}{c}
      credit\\ request
    \end{tabular}
  };
%  \node[anchor=north] (wcr) at (cr.south){
%    \cwrite{\writevar{\varamount}}
%  };
  \node[anchor=south] (gcr) at (cr.north){
    \cguard{\writevar{\varamount} \geq 0}
  };

  \node[place, right= of cr, label={$p_1$}] (pv) {};

  \node[transition, right=1.5cm of pv] (v) {
    \begin{tabular}{c}
      verify\\
    \end{tabular}
  };
  \node[anchor=south] (gv) at (v.north){
    \cguard{\writevar{\varok}}
  };

  \node[transition, below= (1.4cm) of pv] (r) {
    \begin{tabular}{c}
      renegotiate\\ request
    \end{tabular}
  };
  \node[anchor=south] (gr) at (r.north){
    \cguard{
      \begin{array}{@{}l@{}}
        \readvar{\varamount} > \cval{15000}
        \\
        {}\land \readvar{\varok} == \false
      \end{array}
    }
  };

  \node[transition, right= 1.8cm of r] (skipa) {
    \begin{tabular}{c}
      skip\\ assessment
    \end{tabular}
  };
  \node[anchor=south] (gskipa) at (skipa.north) {
    \cguard{
         \begin{array}{@{}l@{}}
        \readvar{\varok}==\false
        \\~
      \end{array}
    }
  };

  \node[place, right= 1.7cm of v, label={$p_2$}] (pa) {};

  \node[transition, right= 1.6cm of skipa] (simplea) {
    \begin{tabular}{c}
      simple\\ assessment
    \end{tabular}
  };
  \node[anchor=south] (gsimplea) at (simplea.north) {
    \cguard{
      \begin{array}{@{}l@{}}
        \readvar{\varok}==\true \land \writevar{\varok}\\
        {}\land \readvar{\varamount} < \cval{5000}
      \end{array}
    }
  };

  \node[transition, right= 2.8cm of simplea] (advanceda) {
    \begin{tabular}{c}
      advanced\\ assessment
    \end{tabular}
  };
  \node[anchor=south] (gadvanceda) at (advanceda.north) {
    \cguard{
      \begin{array}{@{}l@{}}
        \readvar{\varok}==\true \land \writevar{\varok} \\
        {}\land \readvar{\varamount} \geq \cval{5000}
      \end{array}
    }
  };

  \node[place, below= 4mm of simplea, label=350:{$p_3$}] (rp) {};

%  \node[anchor=north] (wr) at (r.south){
%    \cwrite{\writevar{\varok}}
%  };

  \node[htransition, below = 4mm of rp] (et1) { AND split};

  \node[place,
        right= (2.6cm) of et1,
        label=right:{$p_5$}] (op) {};

  \node[place,
        left= (4.5cm) of et1,
        label=left:{$p_4$}] (ip) {};

  \node[transition, below= (1.4cm) of op] (o) {
    \begin{tabular}{c}
      open\\ credit loan
    \end{tabular}
  };

  \node[anchor=south] (go) at (o.north){
    \cguard{\readvar{\varok} == \true}
  };

  \node[transition, below= 1.4cm of ip] (ivip) {
    \begin{tabular}{c}
      inform acceptance\\ customer VIP
    \end{tabular}
  };

    \node[transition, left= of ivip] (inormal) {
    \begin{tabular}{c}
      inform acceptance\\ customer normal
    \end{tabular}
  };
  \node[anchor=south] (ginormal) at (inormal.north){
    \cguard{
      \begin{array}{@{}l@{}}
        \readvar{\varok}==\true\\
        {}\land \readvar{\varamount} < \cval{10000}
      \end{array}
    }

  };

  \node[transition, right= of ivip] (irvip) {
    \begin{tabular}{c}
      inform rejection \\ customer VIP
    \end{tabular}
  };

  \node[anchor=south] (givip) at (ivip.north){
    \cguard{
      \begin{array}{@{}l@{}}
        \readvar{\varok}==\true\\
        {}\land \readvar{\varamount} \geq \cval{10000}
      \end{array}
    }
  };

  \node[anchor=south] (girvip) at (irvip.north){
    \cguard{
      \begin{array}{@{}l@{}}
        \readvar{\varok}==\false\\
        {}\land \readvar{\varamount} \geq \cval{10000}
      \end{array}
    }
  };

  \node[place,
        below= (2.8cm) of op,
        label=right:{$p_7$}] (ps1) {};

  \node[place,
        below= (2.8cm) of ip,
        label=left:{$p_6$}] (ps2) {};

  \node[htransition, below= (2.8cm) of et1] (et2) { AND join };

  \node[place,
        above= of et2,
        label={$o$}] (pout) {};

  \draw[->] (pcr) edge (cr);
  \draw[->] (cr) edge (pv);
  \draw[->] (pv) edge (v);
  \draw[->] (v) edge (pa);
  \draw[->] (pa) edge (gskipa); %rounded corners=10
  \draw[->] (pa) edge (gsimplea);
  \draw[->] (pa) edge (gadvanceda);
  \draw[->] (skipa) edge (rp);
  \draw[->] (simplea) edge (rp);
  \draw[->] (advanceda) edge (rp);
  \draw[->, rounded corners=10pt] (rp) -| (r);
  \draw[->] (gr) edge (pv);
  \draw[->] (rp) edge (et1);
  \draw[->] (et1) edge (op);
  \draw[->] (et1) edge (ip);
  \draw[->] (op) edge (go);
  \draw[->] (ip) edge (ginormal);
  \draw[->] (ip) edge (givip);
  \draw[->] (o) edge (ps1);
  \draw[->] (inormal) edge (ps2);
  \draw[->] (ivip) edge (ps2);
  \draw[->] (irvip) edge (ps2);
  \draw[->] (ip) edge (girvip);
  \draw[->] (ps1) edge (et2);
  \draw[->] (ps2) edge (et2);
  \draw[->] (et2) edge (pout);

\end{tikzpicture}
}
\vspace{-.1in}
\caption{An example of \net, which will be used as working example throughout the paper. Writing and reading operations are omitted for readability. In this example, writing operations exist everytime guards mention $\writevar{\varok}$ or $\writevar{\varamount}$. Terms $\writevar{\varok}$ in transitions \textsf{verify}, \textsf{simple assessment} and \textsf{advanced assessment} are only intended to explicitly indicate that the variable is simply written but can take on either \true or \false.}
\label{fig:net}
\vspace{-.12in}
\end{figure}
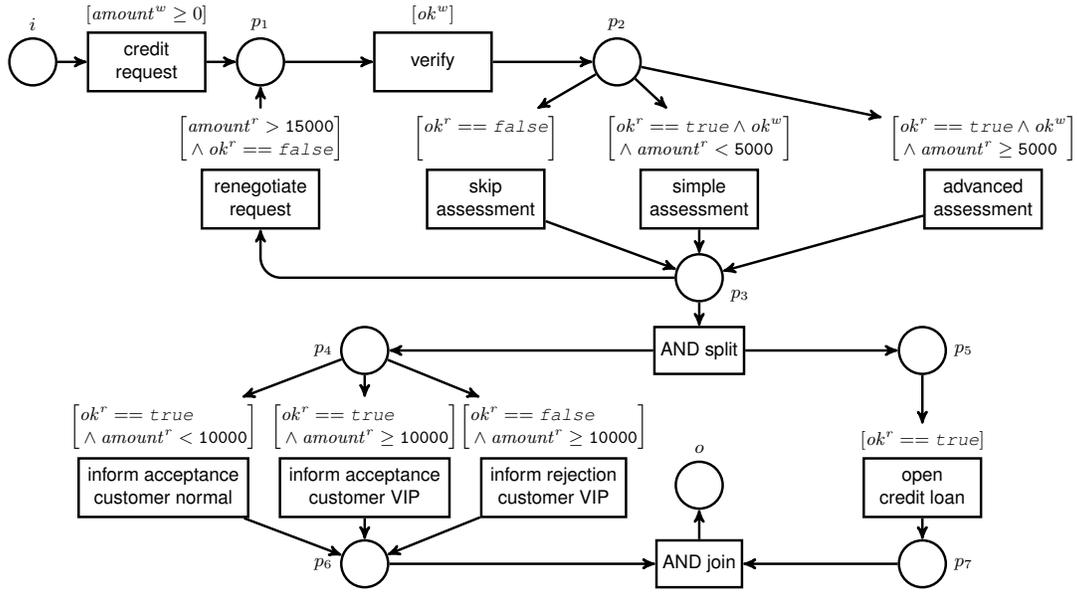

\begin{example}
Figure~\ref{fig:net} shows a \net representing a process for managing credit requests and corresponding loans. The \net employs two case variables, $\varok$ and $\varamount$, respectively used to capture whether the credit request is accepted or not, and what is the requested amount. The process starts by acquiring the amount of the credit request (thus writing $amount$), which must be positive. Then a verification step is performed, determining whether to accept of reject the request (thus writing $ok$). In the rejection case, a new verification may be performed provided that the requested amount exceeds $15000$ euros (\textsf{\small{skip assessment}} followed by \textsf{\small{renegotiate request}}). In the acceptance case, depending on the requested amount, a simple or advanced assessment is performed. The second phase of the process then deals, concurrently, with the opening of a loan (which can only be executed if the request is accepted), and with a communication sent to the customer, which depends again on the combination of data hold by the case variables.
In Figure~\ref{fig:runs} we compactly represent some run fragments. As shown in the figure, the number of legal runs is infinite (e.g., the number of possible values for $\varamount$ is infinite) and also their length may be unbounded (due to cycles in the process).
\label{ex:example1}
\end{example}

We are interested  in characterising properties of \net{s}. For this reason, is it useful to compare these nets by looking at their behaviour, i.e. their trace set.

This is achieved in two steps. We first define the notion of \emph{trace-equivalence}, which will also be helpful for proving our results.

\begin{definition}[Trace-equivalence between \net{s}]
Given two runs $(M_{I1},\varState^0_1) \goto{\sigma_1} (M_1^n,\varState_1^n)$ and $(M_{I2},\varState_2^0) \goto{\sigma_2} (M_2^n,\varState_2^n)$ of two \net{s} $\anet_1$ and $\anet_2$, respectively, these runs are trace-equivalent iff $M_{I1}=M_{I2}$ and for any $i\in [1,n]$ we have that $t^i_1=t^i_2$, namely the transitions are the same.
\end{definition}

Similarly, two \net{s} $\anet_1$ and $\anet_2$ are \emph{trace-equivalent} iff $(a)$ for every legal run $\tau_1$ of $\anet_1$ there exists a trace-equivalent run $\tau_2$ of $\anet_2$ and $(b)$ vice-versa.

Note that for any \net, given a state $(M,\varState)$ and a legal transition firing $(t,\bind)$ from that state, there exists exactly one successor state $(M',\varState')$ such that $(M,\varState)\goto{t,\bind}(M',\varState')$, namely the \net is transition-deterministic (for a given binding).
As a consequence, two runs that are trace-equivalent also traverse the same markings, namely $M^i_1=M^i_2$, $i\in[1,n]$.

\begin{figure}[t]
\centering
\resizebox{\textwidth}{!}{
% !TEX root = ../../main.tex

\begin{tikzpicture}[>=stealth',
    auto,
    thick, node distance = 6cm]

\node[state, label={90:$i, \begin{pmatrix} \varok& \rightarrow & \bot\\ \varamount & \rightarrow & \bot \end{pmatrix}$}] (0) {};
\node[state, label={90:$p_1, \begin{pmatrix} \varok& \rightarrow & \bot\\ \varamount & \rightarrow & 1400 \end{pmatrix}$}, right of=0, xshift=-.4cm] (1) {};
\node[state, label={90:$p_2, \begin{pmatrix} \varok& \rightarrow & \true\\ \varamount & \rightarrow & 1400 \end{pmatrix}$}, right of=1, xshift=-1.5cm] (2) {};
\node[state, label={90:$p_3, \begin{pmatrix} \varok& \rightarrow & \false\\ \varamount & \rightarrow & 1400 \end{pmatrix}$}, right of=2, xshift=.4cm] (3) {};

\node[state, label={90:$p_2, \begin{pmatrix} \varok& \rightarrow & \false\\ \varamount & \rightarrow & 1400 \end{pmatrix}$}, below of=2, yshift=3.5cm] (4) {};

\node[state, label={90:$p_1, \begin{pmatrix} \varok& \rightarrow & \bot\\ \varamount & \rightarrow & 3600 \end{pmatrix}$}, below of=1, yshift=1cm] (5) {};
\node[state, label={90:$p_2, \begin{pmatrix} \varok& \rightarrow & \true\\ \varamount & \rightarrow & 3600 \end{pmatrix}$}, right of=5, xshift=-1.5cm] (6) {};

\node[state, label={90:$p_2, \begin{pmatrix} \varok& \rightarrow & \false\\ \varamount & \rightarrow & 3600 \end{pmatrix}$}, below of=6, yshift=3.5cm] (7) {};

\node[state, label={90:$p_1, (\cdots) $}, label={270:$\cdots$}, below right of=0, shift={(-2cm,1.7cm)}] (n0) {};
\node[right of=n0, xshift=-4.7cm] (n00) {};
\node[right of=n0, yshift=-1cm, xshift=-4.7cm] (n01) {};
\node[right of=3, xshift=-4.7cm] (n3) {};
\node[right of=4, xshift=-4.7cm] (n4) {};
\node[right of=6, xshift=-4.7cm] (n6) {};
\node[right of=7, xshift=-4.7cm] (n7) {};
\node[state,above of=3,yshift=-3.7cm] (n8) {};
\node[right of=n8, xshift=-4.7cm] (n9) {};

\draw[->] (0) edge  node[below] {$(\textsf{credit req.}, (\varamount^w=1400))$}  (1);
\draw[->] (1) edge  node[below] {$(\textsf{verify}, (\varok^w \rightarrow \true))$}  (2);
\draw[->] (2) edge  node[below] {$(\textsf{simple assess.}, \begin{pmatrix} \varok^r & \rightarrow & \true\\ \varamount^r & \rightarrow & 1400 \\
\varok^w & \rightarrow & \false\end{pmatrix})$}  (3);

\draw[->,rounded corners=10pt] (1) |- node[below, pos=.75] {$(\textsf{verify}, (\varok^w \rightarrow \false))$}  (4);

\draw[->,rounded corners=10pt] (0) |- node[below, pos=.75] {$(\textsf{credit req.}, (\varamount^w \rightarrow 3600))$}  (5);

\draw[->] (5) edge  node[below] {$(\textsf{verify}, (\varok^w \rightarrow \true))$}  (6);
\draw[->,rounded corners=10pt] (5) |- node[below, pos=.75] {$(\textsf{verify}, (\varok^w \rightarrow \false))$}  (7);

\draw[->, dashed, out=290, in=180] (0) edge  node[right, pos=.2, xshift=.2cm] {$(\textsf{credit req.}, (\cdots))$}  (n0);
\draw[->, dashed] (n0) edge  node {} (n00);
\draw[->, dashed] (n0) edge  node {} (n01);
\draw[->, dashed] (3) edge  node {} (n3);
\draw[->, dashed] (4) edge  node {} (n4);
\draw[->, dashed] (6) edge  node {} (n6);
\draw[->, dashed] (7) edge  node {} (n7);
\draw[->, dashed] (2) edge  node {} (n8);
\draw[->, dashed] (n8) edge  node {} (n9);

\end{tikzpicture}
}
\caption{Example of run fragments of the working example, assuming an initial \svassignment $\varInit$ in which both case variables are not set. Arcs are labelled with legal transition firings of the form $(t,\bind)$.}
\label{fig:runs}
\end{figure}

\subsection{Data-aware Soundness}
\label{sec:soundness}
We now lift the standard notion of soundness \cite{Aalst1998workflow} to the case of \net{s}.
This requires to quantify not only over the markings of the net, but also on the assignments of its case variables, thus making soundness \emph{data-aware} (we use `data-aware' to distinguish our notion from the one of decision-aware soundness in the literature -- see Section\ref{sec:data-vs-decision}).
In what follows, we write $(M,\varState) \goto{*} (M',\varState')$ to implicitly quantify \emph{existentially} on sequences $\sigma$.

\begin{definition}[Data-aware soundness] A \net is \emph{data-aware sound} iff the following properties hold:
\begin{compactitem}
\item[P1:] $\forall (M, \varState)$. $((M_I,\varInit) \goto{*} (M,\varState)$ $\Rightarrow$ $\exists \varState'$. $(M,\varState) \goto{*} (M_F,\varState'))$
\item[P2:] $\forall (M, \varState)$. $(M_I,\varInit) \goto{*} (M,\varState) \land M\geq M_F$ $\Rightarrow$ ($M=M_F$)
\item[P3:] $\forall t\in T$. $\exists M_1, M_2, \varState_1,\varState_2,\bind$. $(M_I,\varInit) \goto{*} (M_1,\varState_1) \goto{t,\bind} (M_2,\varState_2)$
\end{compactitem}
\label{def:properties}
\end{definition}

The first condition checks the reachability of the output state, that is, whether it is \emph{always} possible to reach the final marking of $\anet$ by suitably choosing a continuation of the current run (i.e., transitions and variable assignments). The second condition captures that the output state is reached in a clean way, i.e., that $\anet$ cannot reach the final marking while \emph{in addition} having other tokens in other places. The third condition verifies the absence of dead transitions, where a transition is considered dead if there is no way of assigning the case variables so as to enable it.

\begin{example}
\label{ex:unsound}
Consider again the \net in Figure~\ref{fig:net}. Such a \net is unsound for a number of reasons, related to the concurrent section in the second phase of the process. Suppose that the verification step assigns $\varok$ to $\false$. Once the execution assigns a token to $p_3$, and the following AND-split transition is fired, two tokens are produced, respectively placing them in $p_4$ and $p_5$. Since the guard of \textsf{open credit loan} is false, token $p_5$ cannot be consumed, and thus it is not possible to properly complete the execution. In addition, if the requested amount is less than $\cval{10000}$, the same occurs also for the token placed in $p_4$.
\end{example}

% !TEX root = main.tex

\newcommand{\domset}{\set{\Domain_{\Integers},\Domain_{\Reals},\Domain_{\mathit{bool}},\Domain_{\mathit{string}}}}
\newcommand{\doms}{\mathfrak{D}}
\newcommand{\dnet}[1]{\ensuremath{#1}-\net\xspace}
\newcommand{\dnets}[1]{\ensuremath{#1}-\net{s}\xspace}
\newcommand{\varat}{\mathit{atype}}
\newcommand{\anorm}{\cval{simple}}
\newcommand{\anone}{\cval{none}}
\newcommand{\aadv}{\cval{advanced}}

\definecolor{incolor}{RGB}{210,220,230}
\definecolor{outcolor}{RGB}{235,215,215}

\newcommand{\dmntext}[1]{\phantom{$\mathtt{[}$}#1\phantom{$\mathtt{]}$}}
\tikzset{
table/.style={
  matrix of nodes,
  row sep=-\pgflinewidth,
  column sep=-\pgflinewidth,
  nodes={
    rectangle,
    draw=black,
    minimum width=.7cm,
    minimum height=5mm,
    align=center },
  text depth=0.25ex,
  text height=1ex,
  nodes in empty cells
  },
  dmn/.style={
    matrix of nodes,
    row sep=-\pgflinewidth,
    column sep=-\pgflinewidth,
    nodes={
      rectangle,
      draw=black,
      text width=12mm,
      minimum height=5mm,
      align=center },
    nodes in empty cells,
  },
  dmnhit/.style={
    rectangle,
    draw,
    minimum height=10.55mm,
    minimum width=7.1mm,
    xshift=1.3mm
  },
  dmnrulen/.style={
    matrix of nodes,
    row sep=-\pgflinewidth,
    column sep=-\pgflinewidth,
    nodes={
      rectangle,
      draw=black,
      text width=5mm,
      minimum height=5mm,
      align=center },
    nodes in empty cells,
  },
}

\section{Modeling with \net{s}}
\label{sec:dmn}

From now on, we always consider \net{s} working over the notable set $\doms = \domset$ of domains introduced at the beginning of Section~\ref{sec:syntax-semantics}.
We show that this class of \net{s} is expressive enough to directly incorporate in the model decisions expressed using the OMG standard DMN S-FEEL language \cite{DMN.2011,CDL16}.
Specifically, we first discuss how \net{s} can be enriched with such decision constructs, arguing that the so-obtained extended model captures those studied in the literature \cite{Batoulis2017,BaHW17}. We then show that such an extension is syntactic sugar, as it can be encoded back into standard \net{s}. This implies that the results presented in this paper can be seamlessly used to formalize the interesting decision-aware process models studied in \cite{Batoulis2017,BaHW17}, and check their soundness considering the different variants of soundness as defined in \cite{BaHW17}, as we will show in Section~\ref{sec:data-vs-decision}.

\subsection{\net{s} with DMN Decisions}
\label{sec:dmn-nets}
The integration of DMN decision with models capturing the control flow of a process, such as workflow nets, has been recently studied in \cite{Batoulis2017,BaHW17}. As argued in \cite{Batoulis2017,BaHW17}, using Petri nets to capture the process control flow does not incur in loss of generality: the integration can be in fact conceptually captured at a higher level of abstraction, such as that of the combination of DMN with BPMN, then applying standard control-flow translation mechanisms \cite{DDO08} to encode the control flow of the input BPMN model into a corresponding Petri net.

The standard way of incorporating a DMN decision into a BPMN process is to introduce a business rule task in the process. This task, in turn, is linked to the DMN decision. Whenever the business rule task is reached during the execution of a process instance, the inputs of the decision are bound to specific values, and the corresponding output result is calculated and incorporated into the state of the process instance for further use. This also corresponds to the notion of \emph{decision fragment} in \cite{BaHW17}.
In the context of \net{s}, the natural incorporation of a DMN decision consequently amounts to introduce a special \emph{decision transition}  that is linked to a DMN decision. Since \net{s} are natively equipped with case data, we assume that the inputs and outputs of the decision coincide with (some of) the case variables of the \net.

\begin{example}
\label{ex:dmn-net}
Consider a variant of the \net shown in Figure~\ref{fig:net}, where we want to explicitly track the type of assessment that must be conducted on a given credit request, from place $p_2$. Therefore, we can transform the three transitions from $p_2$ into the rows of a decision table, and use an additional case variable $\varat$, of type string, as output of the table and consequently in the conditions of the branches of the split-gateway, as shown in Figure~\ref{fig:dmn-net}. Such a variable can be assigned to one among the strings $\anone$, $\anorm$, $\aadv$, respectively indicating no assessment, normal assessment, and advanced assessment.
To do so, we extract the decision logic distributed over the outgoing arcs from place $p_2$ in Figure~\ref{fig:net}, and combine the conditions therein into a single DMN decision, which indicates how the value $\varat$ is computed depending on the values of the two input variables $\varok$ and $\varamount$. Then, we attach this DMN decision to a dedicated decision transition, which is in turn inserted in the net between the verification and assessment steps. Finally, we update the three assessment transitions, associating each of them to its corresponding value for $\varat$. The resulting decision fragment is shown in Figure~\ref{fig:dmn-net}.

\begin{figure}[t]
\centering
\resizebox{.7\textwidth}{!}{
\begin{tikzpicture}[node distance =0pt and 0.5cm]

\begin{scope}[]

\matrix[dmn, xshift=-.2cm] 
  (in) {
  |[fill=incolor]|
  \dmntext{$\varamount$} 
  & 
  |[fill=incolor]|
  \dmntext{$\varok$} 
  \\
  \dmntext{$\geq 0$} &
  \dmntext{}
  \\
};

\matrix[dmn,anchor=north west, xshift=-2mm, text width=2cm] 
  (out) 
  at (in.north east) {
  |[fill=outcolor,text width=27mm]|
  \dmntext{$\varat$}
  \\
  |[text width=27mm]|\dmntext{\scriptsize$\anone,\anorm,\aadv$}
  \\
};

\node[dmnhit,anchor=east]
  (hit)
  at (in.west)
  {\textbf{U}};
  
\node[rectangle,
      draw,
      anchor=south west,
      minimum height=5mm,
      minimum width=45mm,
      yshift=-\pgflinewidth]
  (title)
  at (hit.north west)
  {\textbf{Determine Type of Assessment}};
  
\matrix[dmn,anchor=north west,yshift=1.9mm]
  (rules)
  at (in.south west) {
    \dmntext{--}
    &
    \dmntext{$\false$}
     \\
    \dmntext{\scriptsize$< \cval{5000}$}
    &
    \dmntext{$\true$}
    \\
     \dmntext{\scriptsize$\geq \cval{5000}$}
    &
    \dmntext{$\true$}
    \\
  };
  
  \matrix[dmnrulen,
        anchor=north east,
        xshift=2.4mm] 
  (rulenum)
  at (rules.north west) {
    \dmntext{1}\\
    \dmntext{2}\\
    \dmntext{3}\\
};
  
  \matrix[dmn,
        anchor=north west,
        yshift=1.9mm]
  (ruleconclusions)
  at (out.south west) {
  |[text width=27mm]|\dmntext{$\anone$} \\
  |[text width=27mm]|\dmntext{$\anorm$} \\
  |[text width=27mm]|\dmntext{$\aadv$}  \\
};

% \node[
%    draw,
%    cloud,
%    cloud puffs = 15,
%    minimum width=1cm,
%    minimum height=0.75cm,
%    very thick
%	](in) at (0,-3.2) {\Large{...}};
%	
\node[place, 
      %right= 7.5mm of in,
      label={[xshift=2mm,yshift=4mm]left:$p_2$}] (p) at (1.5,-3.2) {};
\node[htransition, right= 7.5mm of p] (t) {};    
\node[place, 
      right= 7.5mm of t,
      label={[xshift=2mm,yshift=4mm]left:$p_2'$}] (p2) {};

   \node[font=\sffamily,transition, right = 7.5mm of p2] (advanceda) {
    \begin{tabular}{c}
      advanced\\ assessment
    \end{tabular}
  };
  \node[anchor=south] (gadvanceda) at (advanceda.north) {
    \cguard{\varat == \aadv}
  };

  \node[font=\sffamily,transition, above = 6mm of advanceda] (normala) {
    \begin{tabular}{c}
      simple\\ assessment
    \end{tabular}
  };
  \node[anchor=south] (gnormala) at (normala.north) {
    \cguard{\varat == \anorm}
  };
    
  \node[font=\sffamily,transition, above = 6mm of normala] (skipa) {
    \begin{tabular}{c}
      skip\\ assessment
    \end{tabular}
  };
  \node[anchor=south] (gskipa) at (skipa.north) {
    \cguard{\varat == \anone}
  };

%  \node[
%    draw,
%    cloud,
%    cloud puffs = 15,
%    minimum width=1cm,
%    minimum height=0.75cm,
%    very thick,
%    right= 3.5cm of p2,
%	](out)  {\Large{...}};
  
  \node[place,
        right= 3.5 cm of p2,
        label={[xshift=2mm,yshift=4mm]left:$p_3$}] (out) {};

  \node[
    draw,
    densely dotted,
    ultra thick,
    inner sep= 1.5mm,
    rounded corners=10pt,
    xshift=-.8mm,
    yshift=.8mm,
    fit = (ruleconclusions.south east) (title.north west)
    ] (dmncover) {};

%  \draw[->, very thick] (in) edge (p);
  \draw[->, very thick] (p) edge (t);
  \draw[->, very thick] (t) edge (p2);
  \draw[->, very thick, rounded corners=10pt] (p2) |- (skipa);
  \draw[->, very thick, out=70, in=180] (p2) edge (normala);
  \draw[->, very thick, rounded corners=10pt] (p2) edge (advanceda);
  \draw[->, very thick, rounded corners=10pt] (skipa) -| (out);
 \draw[->, very thick, rounded corners=10pt,out=0,in=110] (normala) edge (out);
 \draw[->, very thick, rounded corners=10pt] (advanceda) edge (out);
 \draw[ultra thick,-,densely dotted] (t) edge (dmncover);

   \end{scope}

    %%%%%%%%%%%%%%%%%%%%%%%%% CUT HERE
    \end{tikzpicture}
     
}
  \caption{Fragment of an extended \net equipped with a DMN decision transition, corresponding to the decision logic of the transitions outgoing from $p_2$ in Figure~\ref{fig:net}.}% Below: encoding of the extended net into a normal \net.}
  \label{fig:dmn-net}
\end{figure}
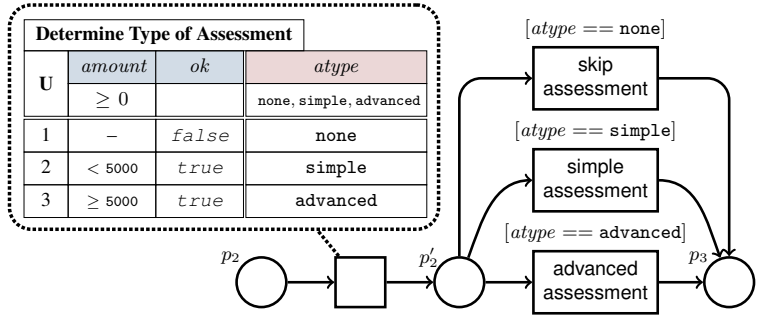

\end{example}

%\begin{figure}[t]
%\centering
%\includegraphics[width=.48\textwidth]{Figures/dmn.pdf}
%\includegraphics[width=.51\textwidth]{Figures/dmn-2-2.pdf}
%\vspace{-.09in}
%\caption{Left: fragment of an extended \net equipped with a DMN decision transition. Right: encoding of the extended net into a normal \net}
%  \label{fig:dmn-net-encoding}
%  \vspace{-.12in}
%\end{figure}

This extension of \net{s} with DMN-based decision transitions captures the decision-aware models recently studied in \cite{Batoulis2017,BaHW17}. On the one hand, we reconstruct the decision transitions defined there. On the other hand, we explicitly account for case variables and for (guarded) updates of their values, introducing a source of nondeterminism that depends on picking a new value for the updated variable among a possibly infinite set of potential values.

When considering BPMN as an input specification language, we produce a corresponding \net as follows:
\begin{compactitem}
\item As for the control flow, we apply the same translation from BPMN to Petri nets adopted in \cite{BaHW17}, namely the one described in \cite{DDO08};
\item For each data object name in the BPMN model, we introduce a  case variable with the same name. We only deal with data object collections whose (largest) size is known a-priori, so that a dedicated case variable is produced for each element of the collection.
\item Whenever a BPMN activity connects to a data object with name $n$, we ensure that the corresponding \net transition writes the variable mirroring that data object, i.e., we set its guard to be the formula $\cguard{\writevar{n}}$;
\item If the BPMN diagram predicates over the states of an object with name $n$, we introduce in the \net a ``state'' variable $n_{state}$ of type string, to keep track of the current state of $n$.
\item If a BPMN activity requires an object with name $n$ to be in a given state $\cval{s}$ prior execution, we guard the corresponding \net transition with condition $\cguard{\readvar{n_{state}} == \cval{s}}$.
\item If a BPMN activity updates an object with name $n$ to state $\cval{s'}$ upon completion, we guard the corresponding \net transition with condition $\cguard{\writevar{n_{state}} == \cval{s'}}$.
\end{compactitem}

\subsection{Encoding \net{s} with DMN Decisions to Normal \net{s}}
\label{sec:encoding}
We now show that the DMN S-FEEL extension proposed in Section~\ref{sec:dmn-nets} is actually syntactic sugar, in the sense that its induced decision logic can be mimicked by a normal \net. In what follows, we restrict the attention to decision tables with \emph{unique} hit policies, although other policies can be considered as well by introducing a case variable for each subset of possible outputs of the decision table. This however generates a combinatorial explosion.

We describe here the transformation intuitively, because a formal description would be too cumbersome.
%%
%Given a decision table $dt$ with input variables $I$, output variables $O$ and rule set $R$, for every rule $r = (\set{k_{in}^1, \ldots, k^m_{in}} , \set{k_{out}^1, \ldots, k_{out}^p} )$ in $R$ (we assume the formalisation in \cite{BaHW17}) we create a transition $t_r$ with $\guard(t_r)$ obtained as the conjunction of:
%\begin{compactitem}
%\item $(\varw = k_{out}^i)$ for every $i\in[1,p]$
%\end{compactitem}
Consider a \net $\anet_{\textit{DMN}}$ extended with DMN decision transitions.
Intuitively, we need to transform the application of each rule in the decision table, together with the successive branch in the split-gateway which covers it, into a simple transition $t$ with $\guard(t)$ encoding all the condition of the rule on both input variables (read variables) and output variables (written variables). In Figure~\ref{fig:dmn-net} we show an intuitive example. Notice that, whenever there exist more than one decision tasks in $\anet_{\textit{DMN}}$ that are possible from the same place, to correctly preserve the independence of these tasks (and that of their decision tables), we need to introduce \emph{internal} transitions.

\begin{figure}[b]
\centering
\resizebox{\textwidth}{!}{
% !TEX root = ../../main.tex

\begin{tikzpicture}[node distance =0pt and 0.5cm]
\tikzstyle{transition}=[rectangle,thick,draw=black,  inner sep=6pt]

\begin{scope}[]

\matrix[dmn, xshift=-.2cm] 
  (in) {
  |[fill=incolor]|
  \dmntext{$class$} 
  & 
  |[fill=incolor]|
  \dmntext{$points$} 
  \\
  \dmntext{} &
  \dmntext{}
  \\
};

\matrix[dmn,anchor=north west, xshift=-2mm, text width=2cm] 
  (out) 
  at (in.north east) {
  |[fill=outcolor,text width=22mm]|
  \dmntext{$disc$}
  \\
  |[text width=22mm]|\dmntext{\scriptsize$\set{25\%, 35\%, 60\%}$}
  \\
};

\node[dmnhit,anchor=east]
  (hit)
  at (in.west)
  {\textbf{U}};
  
\node[rectangle,
      draw,
      anchor=south west,
      minimum height=5mm,
      minimum width=33mm,
      yshift=-\pgflinewidth]
  (title)
  at (hit.north west)
  {\textbf{Determine discount}};
  
\matrix[dmn,anchor=north west,yshift=1.9mm]
  (rules)
  at (in.south west) {
    \dmntext{\scriptsize$2$}
    &
    \dmntext{\scriptsize$-$}
     \\
    \dmntext{\scriptsize$1$}
    &
    \dmntext{\scriptsize$< \cval{100}$}
    \\
     \dmntext{\scriptsize$1$}
    &
    \dmntext{\scriptsize$\geq \cval{100}$}
    \\
  };
  
  \matrix[dmnrulen,
        anchor=north east,
        xshift=2.4mm] 
  (rulenum)
  at (rules.north west) {
    \dmntext{1}\\
    \dmntext{2}\\
    \dmntext{3}\\
};
  
  \matrix[dmn,
        anchor=north west,
        yshift=1.9mm]
  (ruleconclusions)
  at (out.south west) {
  |[text width=22mm]|\dmntext{\scriptsize$25\%$} \\
  |[text width=22mm]|\dmntext{\scriptsize$35\%$} \\
  |[text width=22mm]|\dmntext{\scriptsize$60\%$}  \\
};

% \node[
%    draw,
%    cloud,
%    cloud puffs = 15,
%    minimum width=1cm,
%    minimum height=0.75cm,
%    very thick
%	](in) at (0,-3.2) {\Large{...}};
%	
\node[place, 
      %right= 7.5mm of in,
      label={}] (p) at (1.1,-3.2) {$p_A$};
\node[htransition, right= 7.5mm of p] (t) {$t_d$};    
\node[place, 
      right= 7.5mm of t,
      label={}] (p2) {$p_B$};

   \node[transition, right = 7.5mm of p2] (advanceda) { $t_3$  };
  \node[anchor=south] (gadvanceda) at (advanceda.north) {
    \cguard{disc == 60\%}
  };

  \node[transition, above = 6mm of advanceda] (normala) { $t_2$  };
  \node[anchor=south] (gnormala) at (normala.north) {
    \cguard{disc == 35\%}
  };
    
  \node[font=\sffamily,transition, above = 6mm of normala] (skipa) { $t_1$ };
  \node[anchor=south] (gskipa) at (skipa.north) {
    \cguard{disc == 25\%}
  };
  
    \node[font=\sffamily,transition, below = 10mm of gadvanceda, label={below:$[age>65]$}] (bbbb) { $t_4$ };

 % \node[place, right= 3.5 cm of p2, label={[xshift=2mm,yshift=4mm]left:$p_B$}] (out) {};

  \node[
    draw,
    densely dotted,
    ultra thick,
    inner sep= 1.5mm,
    rounded corners=10pt,
    xshift=-.8mm,
    yshift=.8mm,
    fit = (ruleconclusions.south east) (title.north west)
    ] (dmncover) {};

%  \draw[->, very thick] (in) edge (p);
  \draw[->, very thick] (p) edge (t);
  \draw[->, very thick] (t) edge (p2);
  \draw[->, very thick, rounded corners=10pt] (p2) |- (skipa);
  \draw[->, very thick, out=70, in=180] (p2) edge (normala);
  \draw[->, very thick, rounded corners=10pt] (p2) edge (advanceda);
  %\draw[->, very thick, rounded corners=10pt] (skipa) -| (out);
 %\draw[->, very thick, rounded corners=10pt,out=0,in=110] (normala) edge (out);
% \draw[->, very thick, rounded corners=10pt] (advanceda) edge (out);
 \draw[ultra thick,-,densely dotted] (t) edge (dmncover);
  \draw[->, very thick, rounded corners=10pt] (p)  |- (bbbb);

   \end{scope}

\begin{scope}[shift={(6cm,0cm)}, node distance =0pt and 0.5cm]

\node[place] (p) at (1.5,-3.2) {$p_A$};
\node[htransition,  right= 7.5mm of p, fill=black, text=white] (t) {\textbf{$t_d$}};    
\node[place, right= 7.5mm of t, label={}] (p2) {$p_B$};

   \node[transition, right = 7.5mm of p2] (advanceda) { $t_3$  };
  \node[right of=advanceda] (gadvanceda) [text width=4cm,align=center, xshift=26mm]  {
    $class^r = 1 \land points^r\geq 100$\\$\land~disc^w = 60\%$
      };

  \node[transition, above = 6mm of advanceda] (normala) { $t_2$  };
  \node[right of=normala] (gnormala) [text width=4cm,align=center, xshift=26mm]  {
    $class^r = 1 \land points^r< 100$\\$\land~disc^w = 35\%$
  };
    
  \node[font=\sffamily,transition, above = 6mm of normala] (skipa) { $t_1$ };
  \node[right of=skipa] (gskipa) [text width=4cm,align=center, xshift=26mm]  {
    $class^r = 2 \land disc^w = 25\%$
  };
  
    \node[font=\sffamily,transition, below = 5mm of advanceda, label={right:$~~~age^r>65$}] (bbbb) { $t_4$ };

 % \node[place, right= 3.5 cm of p2, label={[xshift=2mm,yshift=4mm]left:$p_B$}] (out) {};

  \node[
    draw,
    densely dotted,
    ultra thick,
    inner sep= 1.5mm,
    rounded corners=10pt,
    xshift=-.8mm,
    yshift=.8mm,
    fit = (ruleconclusions.south east) (title.north west)
    ] (dmncover) {};

%  \draw[->, very thick] (in) edge (p);
  \draw[->, very thick] (p) edge (t);
  \draw[->, very thick] (t) edge (p2);
  \draw[->, very thick, rounded corners=10pt] (p2) |- (skipa);
  \draw[->, very thick, out=70, in=180] (p2) edge (normala);
  \draw[->, very thick, rounded corners=10pt] (p2) edge (advanceda);
  %\draw[->, very thick, rounded corners=10pt] (skipa) -| (out);
 %\draw[->, very thick, rounded corners=10pt,out=0,in=110] (normala) edge (out);
% \draw[->, very thick, rounded corners=10pt] (advanceda) edge (out);
  \draw[->, very thick, rounded corners=10pt] (p)  |- (bbbb);

   \end{scope}

  \end{tikzpicture}
%}
}
  \caption{Fragment of an extended \net equipped with a DMN decision transition translated into a fragment of a regular \net. The same translation can be applied to obtain a \net from process models equipped with DMN decisions as those in \cite{BaHW17}, which allow decisions modelled as \emph{decision fragments}. Note how an internal transition was used to keep the decision between transitions $t_1,t_2,t_3$ separate from $t_4$ (the same can be done when instead of $t_4$ we have another DMN decision). This is consistent with the intended semantics of properties $P1$-$P3$ in Definition~\ref{def:properties}, which requires that the process can be completed from every possible reachable state. Namely, for the \net (right) to be (data-aware) sound, there must be a way to complete the process from any reachable state $(M,\varState)$, including those in which $M$ is so that there are tokens in $p_A$ or $p_B$. This must be guaranteed irrespective of the value of the variables (\svassignment $\varState$). Hence, if the \net (right) is sound, this also implies that the decision table (left) is conditionally complete and conditionally output covered, as defined in \cite{BaHW17}. This will be discussed further in Section~\ref{sec:data-vs-decision}. }
  \label{fig:dmn-net}
\end{figure}
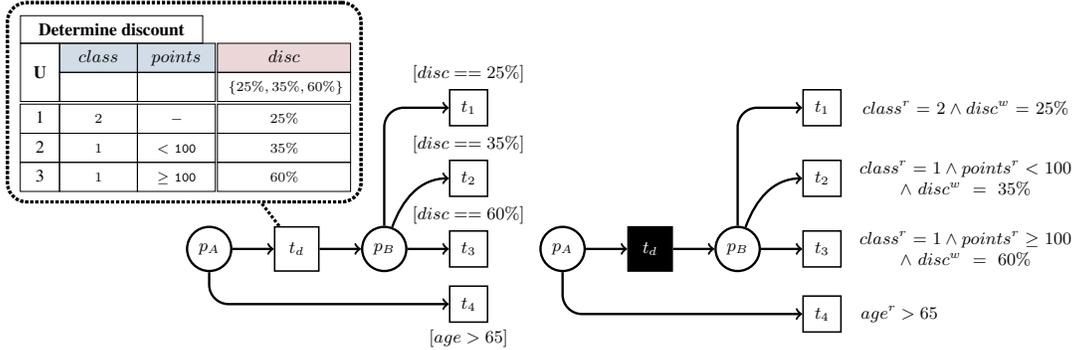

%\begin{example}
 %In Figure~\ref{fig:dmn-net} we depict the intuitive transformation of a fragment of an extended \net equipped with a DMN decision transition translated into a fragment of a regular \net.
 %%
%Also, consider the extended \net obtained by applying the procedure above to the \net Figure~\ref{fig:net}, for each decision (here, those in $p_2$ and $p_3$). Clearly, the procedure is invertible, so that we can easily obtain the original \net as in Figure~\ref{fig:net} by transforming each decision table into a set of transitions $t$ so that $\guard(t)$ reads and writes the case variables considered in the corresponding decision table.
%\end{example}

%%%%%%%%%%%%%%%%%%%%%%%%%%%%%%%%% CUT
%%%%%%%%%%%%%%%%%%%%%%%%%%%%%%%%% CUT

\section{Soundness Verification}
\label{sec:verification}

% !TEX root = main.tex

%\subsection{Coloured Petri Nets}
%\label{sec:cpn}

Coloured Petri Nets (\cpn{s}) are an extension to Data Petri Nets that have a better support for time and resource~\cite{JensenCPN:2009}.
Furthermore, \cpn{s} can be simulated through CPN Tools~\cite{RatzerCPNTools}, which makes it possible to build on existing techniques to compute soundness.
Differently from Data Petri Nets where variables are global, \cpn{s} encode the data aspects in the tokens, allowing tokens to have a data value, called \emph{color}, attached to them. Each place in a \cpn{s} usually contain tokens of one type, and this type is called \emph{color set} of the place.

Figure~\ref{fig:cpnExplanation} illustrates a \cpn{s}. Differently from Petri nets and \net{s}, tokens are associated with values (e.g.\ \emph{low} or \emph{high} in our example). When a transition fires, e.g.\ \emph{check\_low}, tries to consume one of the tokens, e.g.\ the token with value \emph{high}, and assign the token value to the variable on the arc, i.e.\ variable $q$ takes on value \emph{high}. This variable assignment  (a.k.a.\ binding) is valid if it does not violate the possible guard. In the example, the guard states the $q$ must be given value \emph{low}. This means that tokens with value \emph{high} cannot be consumed by transition \emph{check\_low}. Conversely, tokens with value \emph{high} can be consumed by transition \emph{check\_high}. All places in this example of \cpn are allowed to contain tokens associated with an enumerated type $\{high, low\}$, with the latter being the so-called color set associated with every place of this \cpn.

\begin{figure}[t]
\centering
\resizebox{.6\textwidth}{!}{
  \begin{tikzpicture}
  
   \begin{scope}[shift={(0cm,0cm)}]
  
    \node[place,label=left:unchecked] (u) at (0,0) {};
    \node[token] (token1) at (-.2,0) {\phantom{a}};
    \node[token] (token2) at (0,.2) {\phantom{a}};
    \node[token] (token3) at (.2,0) {\phantom{a}};
    \node[token] (token4) at (0,-.2) {\phantom{a}};
    
    \node[transition, 
          right= 2cm of u,
          font=\sffamily,
          yshift=6mm] (tl) {check low};
    \node[anchor=south] at (tl.north) {
      \cguard{q == \cval{low}}
    };
    
    \node[transition, 
          right= 2cm of u,
          font=\sffamily,
          yshift=-6mm] (th) {check high};
    \node[anchor=north] at (th.south) {
      \cguard{q == \cval{high}}
    };
    
    \node[place,
          label= right:checked low,
          right= of tl] (pl) {};

    \node[place,
          label= right:checked high,
          right= of th] (ph) {};
    
    \draw[->,very thick] (u) edge node[anchor=south] {$q$} (tl); 
    \draw[->,very thick] (u) edge node[anchor=north] {$q$} (th);
    \draw[->,very thick] (tl) edge node[anchor=south] {$q$} (pl); 
    \draw[->,very thick] (th) edge node[anchor=north] {$q$} (ph);

    \node (token1c) at (-1,1) {$\tup{\cval{low}}$};
    \node (token2c) at (0,1.3) {$\tup{\cval{low}}$};
    \node (token3c) at (1,1) {$\tup{\cval{low}}$};
    \node (token4c) at (0,-1.3) {$\tup{\cval{high}}$};
    
    \draw[-,thick,densely dotted] (token1) edge (token1c);
    \draw[-,thick,densely dotted] (token2) edge (token2c);
    \draw[-,thick,densely dotted] (token3) edge (token3c);
    \draw[-,thick,densely dotted] (token4) edge (token4c);
    
   \end{scope}

  \end{tikzpicture}
}
\caption{Example of a \cpn}
  \label{fig:cpnExplanation}
  \end{figure}

Definition~\ref{def:CPNs} provides a definition of a \cpn, which is a simplifying version of the original definition to keep the explanation simple. Yet, it covers all the cases necessary in this paper. It is worth highlighting that tokens  can also be associated with no values. To cover this case, we introduce the colorset $\bullet = \{\circ\}$, which namely corresponds to black tokens in normal Petri nets.
\begin{definition}[\cpn] %[Coloured Petri Nets]
A \cpn is a tuple $(P, T, A, \Sigma, V, C, N, E, G, I )$ where:
\begin{compactitem}
  \item $P,T,A$ are sets of places, transitions and direct arcs, respectively;
  %\item $P$ is a set of places;
  %\item $T$ is a set of transitions;
  %\item $A$ is a set of direct arcs;
  \item $\Sigma$ is a set of color sets defined within the CPN model and $V$ a set of variables;
%  \item $V$ is a set of variables;
  \item $C: P \rightarrow \Sigma \cup \set{\bullet}$ is a \emph{color function} from places to a color set in $\Sigma \cup \{\bullet\}$;
  \item $N: A \rightarrow (P \times T) \cup (T \times P)$ is a \emph{node function} that maps each arc to either a pair $(p,t)$ indicating that the arc is between a place $p \in P$ to a $t \in T$, or $(t,p)$ indicating that the arc connects $t \in T$ to $p \in P$;
  \item $E: A \rightarrow V \cup \set{\var_\bullet}$ is an arc expression function,  assigning variables to arcs;
  \item $G: T \rightarrow \Expr(V)$ is a guard function that maps each transition $t \in T$ to an expression $G(t)$ with the additional constraint that $G(t)$ can only employ variables with which arcs entering $t$ are annotated: $G(t) \in \bigcup_{a \in A. N(a)=(p,t)} E(a)$;
  \item $I: P \rightarrow \MS(\Sigma \cup \set{\bullet})$ is an initialisation function assigning color values to places.
  For a place $p \in P$, $I(p)$ indicates the color of the tokens in $p$ at the initial marking, with $I(p) \in C(p)$.
\end{compactitem}
\label{def:CPNs}
\end{definition}
Variable $\var_\bullet$ is a special variable that is intended to only take on one value, namely $\circ$.
In general, for any arc $a \in A$, expression $E(a)$ can be more complex than just being a single variable. However, this simplification covers all the cases of arc's expressions we consider here. %that are necessary in this paper.
The concept of a marking $M$ can be easily extended to \cpn as $M: P \rightarrow \MS(\Sigma \cup \set{\bullet})$ where $M(p)$ is a multiset of elements, each of which it is the data (a.k.a.\ color in \cpn) associated to a different token in $p$.

A \cpn run is of the form $M^0 \goto{t^1,\cpnbind^1} M^1 \goto{t^2,\cpnbind^2} \ldots \goto{t^n,\cpnbind^n} M^n$ where $M^0 = I$ where, for all $1\leq i \leq n$, $\cpnbind^i : V \rightarrow (\Sigma \cup \set{\circ})$ is the so-called \emph{binding} function.
Function $\cpnbind^i$ is defined over the set of variables of the arcs entering transition $t^i$. When firing transition $t^i$ in marking $M^i$, only legal bindings are possible. A binding is legal for a transition $t$ if:\footnote{In the remainder, given a transition $t \in T$, we denote ${}^\bullet t=\{p \in P. \exists a \in A. N(a)=(p,t)\}$ and ${t}^\bullet=\{p \in P. \exists a \in A. N(a)=(t,p)\}$}
\begin{enumerate}
  \item Each variable $v$ associated with an arc $a$ s.t. $N(a)=(p,t^i)$ for some $p \in {}^\bullet t^i$ is in the domain of $\cpnbind^i$: $dom(\cpnbind^i)=\bigcup_{a \in A. N(a)=({}^\bullet t^i, t^i)} E(a)$
  \item $\cpnbind^i(v)$ takes on a value that is associated with one of the tokens in every place $p$ that has an arc to $t^i$ that is annotated with $v$: $\forall v \in dom(\cpnbind^i)$, $\forall p \in P$
  s.t. $\exists a \in A$ with $N(a)=(p,t^i)$ and $E(a)=v$, $\cpnbind^i(v) \in M^i(p)$.
  \item The guard of $t$ evaluates to true when variables are substituted as per $\cpnbind^i$: $\phi_{[\cpnbind^i]}=\true$
\end{enumerate}
Firing $t^i$ with $\cpnbind^i$ in marking $M^i$ leads to a marking $M^{i+1}$, denoted as $M^i \goto{t^i,\cpnbind^i} M^{i+1}$, that is constructed as follows:\footnote{Notation $a_{(p,t^i)}$ denotes the arc $a \in A$ s.t.\ $N(a)=(p,t^i)$ and cannot be employed if such an arc does not exist. Set-difference operator $\setminus$ is overridden for multisets: given two multisets $A$ and $B$, for each element $x \in B$ with cardinality $b_x>0$ in $B$ and cardinality $a_x \geq 0$ in $A$, the cardinality of $x$ in $B \setminus A$ is $\max(0,b_x - a_x)$; moreover, $x \not\in B \Rightarrow x \not\in B \setminus A$.}
\begin{equation*}
  M^{i+1}(p)=
  \left\{
  \begin{array}{ll}
    M^i(p) & \text{if } p \not\in ({}^\bullet t^i \cup t^{i \bullet}) \\
    M^i(p) \setminus [\cpnbind^i(E(a_{(p,t^i)}))] & \text{if } p \in {}^\bullet t^i \\
    M^i(p) \uplus [\cpnbind^i(E(a_{(t^i,p)}))] & \text{if } p \in t^{i \bullet}  \\
  \end{array}
  \right.
\end{equation*}
A firing $M^i \goto{t^i,\cpnbind^i} M^{i+1}$ is legal if $\cpnbind^i$ is a valid binding of $t^i$.
A \cpn run is legal if it is a sequence of legal firings.

\subsection{Translating \net{s} into Colored Petri Nets}
\label{sec:translation-general}

% !TEX root = main.tex

\begin{figure}[t]
\centering
\resizebox{.8\textwidth}{!}{
\input{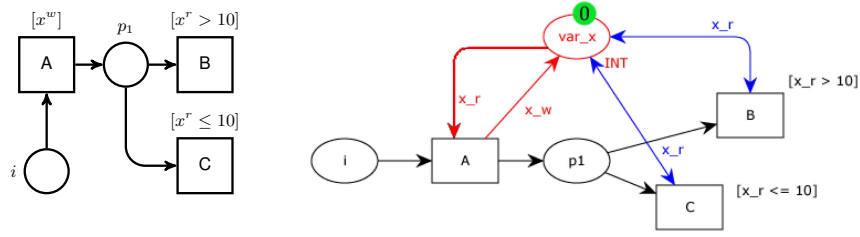}
}
\caption{Conversion of a simple \net to \cpn (left to right), where the green token indicates that the place contains a token with value 0. The color set of $var\_x$ is shown under the place. Arcs without annotations are equivalent to those with annotation $v_{\bullet}$ and places with no color sets are those associated with $\bullet$. Double-headed arcs are a shortcut to indicate that there are two arcs with the same inscription in either of directions.}
\label{fig:exampleMapping}
\end{figure}

This section illustrates how a \net $\anet = \DPNDELEM$  can be converted into a \cpn $\acpn = (\celem{P},\allowbreak  \celem{T},\allowbreak  \celem{A}, \allowbreak \celem{\Sigma},\allowbreak \celem{V},\allowbreak \celem{C}, \allowbreak\celem{N}, \allowbreak\celem{E}, \celem{G}, \celem{I} )$.
Intuitively, as exemplified in Figure~\ref{fig:exampleMapping}, the transitions and places of the \net become transitions and places of the \cpn. Each variable $\var$ of the \net becomes one \emph{variable place} that is associated with the same colorset as the variable type of $\var$ (place $var\_x$ in example in Figure~\ref{fig:exampleMapping} (right). These places always contain exactly one token, holding the current value of the variable. Guards are exactly the same as the guards of the \cpn, and if a transition writes a variable $\var$, the token in the variable place for $\var$ is consumed and a new token generated to model that is the value of $\var$ is updated. For instance, the fact that transition $A$ of the \net Figure~\ref{fig:exampleMapping} (left) writes a new value for variable $x$ (denoted $x^w$) is modelled in Figure~\ref{fig:exampleMapping} (right) through the two red arcs annotated with $x\_r$ and $x\_w$ that respectively enters and exits transition $A$: this allows the token holding the value of $\var$ to change value when returned back to the place. The read operations can be modelled as the blue arcs as in Figure~\ref{fig:exampleMapping} (right), with the same annotation, so that the token from the variable place is consumed and then put back. %The fact that the translation still requires the token to be consumed and later put back is motivated by the fact that the value needs to be read and put into a variable (in the example variable $x\_r$) that potentially can be used in the transition's guard.
The initial marking of the \net becomes part of the initial marking of the \cpn: each variable place is initialized with a token that holds the initial value of the variable. In Figure~\ref{fig:exampleMapping} (right),  the place $var\_x$ contains a token with value $0$, assuming $\varInit(x)=0$.
The following formalizes this intuition.

%\begin{description}
%  \item[Places.]
\noindent
\textbf{Places.} The places of the \cpn consist of all places of the \net, plus one dedicated extra place $\xi(\var)$, hereafter called \emph{variable place}, for each \net variable $\var\in \delem{\Vars}$;
  $\celem{P} = \delem{P} \displaystyle \cup_{v \in \delem{\Vars}} \xi(v)$. A variable place $\xi(v)$ always has one token, and precisely the one holding the current value of variable $\var$ at each step of the simulation of the \cpn.\\
 %%
%\item[Transitions]
\textbf{Transitions.} The transitions of the \cpn and \net are the same: $\celem{T} = \delem{T}$.\\
%\item[Arcs.]
%%
\textbf{Arcs.} Each arc in $\delem{F}$ is preserved, and for any transition $t \in T$ and  variable read and/or written in $t$, we add two extra arcs: $\celem{A} = \delem{F} \cup \{ (t,\xi(v)), (\xi(v),t) \;|\; t \in \delem{T}, v \in \delem{\reads}(t) \cup \delem{\writes}(t)\}$,
 % in $\celem{A}$:% there exist two arcs $(t,\xi(\var))$ and $(\xi(\var),t)$:
  %\massi{Mi rendo conto che $\celem{A}$ and $\celem{N}$ sembrano un po' ridondanti; pero' li ho presi dalle definizioni esistenti: }
  %\begin{equation*}
  %\begin{array}{l}
   %    \celem{A} = \delem{F} \cup \{ (t,\xi(v)), (\xi(v),t) \;|\; t \in \delem{T}, v \in \delem{\reads}(t) \cup \delem{\writes}(t)\}
  %\end{array}
  %\end{equation*}
  and the node function is defined as $\celem{N}(a)=a$ for any $a \in \celem{A}$.\\
 % \item[Color Sets.]
 %%
  \textbf{Color sets.} The \cpn supports the same variable types as the \net, and we consider the color sets $\celem{\Sigma} = \set{ \Integers, \Reals, \Bools, \Strings, \UNITs }$ corresponding to the domains defined at the beginning of  Section~\ref{sec:syntax-semantics} for integers, reals, booleans and strings, respectively. %Color set $\Strings$ is used to represent the universe of all possible strings of any length.
  %\item[Variables.]
  %%
  \textbf{Variables.} For each variable $v \in \delem{\Vars}$ the \cpn considers the variables $v^r$ and $v^w$, i.e.,
  %Additionally, for each variable $v \in \delem{\Vars}$, the Color Petri net defines two variables $v^r$ and $v^w$:
  $\celem{V} = \{ v^r, v^w | v \in \delem{\Vars} \} \cup \set{v_{\bullet}}$, where $v_{\bullet}$ is the special dummy variable with the only possible value $\circ$.\\
% \item[Color Functions.]
%%
 \textbf{Color functions.} Recalling the shorthand notation $\typevar$ for typed variables in $\delem{\Vars}$, each place $p \in P_C$ is associated with a color set as follows. If $p\in\delem{P}$ then $\celem{C}(p) = \UNITs$, otherwise:
 % Each place $p \in P_C$ is associated with a colorset as follows:
 %\massi{Mettere la frase ``is defined over..''' e' bruttino. Dovremmo introdurre i tipi di variabili nelle \net ? }
  \begin{equation*}
\celem{C}(p) = \left\{
\begin{array}{ll}
  \Integers & \text{ if there is } \var_{\Integers}\in\delem{\Vars} \text{ and } p=\xi(\var_{\Integers});\\
  \Bools & \text{ if there is } \var_{\Bools}\in\delem{\Vars} \text{ and } p=\xi(\var_{\Bools});\\
  \Strings & \text{ if there is } \var_{\Strings}\in\delem{\Vars} \text{ and } p=\xi(\var_{\Strings});\\
  \Reals & \text{ if there is } \var_{\Reals}\in\delem{\Vars} \text{ and } p=\xi(\var_{\Reals}).\\
  \end{array}
\right.
\end{equation*}\\
%\item[Guards.] Every guard of the \cpn is identically equal to the guard of the Data Petri nets: for each transition $t \in \celem{T}$, $\celem{G}(t)=\delem{guard}(t)$.
%\item[Guards.]
%%
 \textbf{Guards.} Guards are not changed: $\celem{G}(t)=\delem{\guard}(t)$ for each $t \in \celem{T}$.\\
%\item[Arc Expressions.] The following is the definition of the expression associated with any arc between a source node $s \in \celem{P} \cup \celem{T}$ and a target $t \in \celem{P} \cup \celem{T}$ with $(s,t) \in \celem{A}$:
%\item[Arc Expressions.]
%%
\textbf{Arc expressions.} The expression associated with any arc between a source node
    $s \in \celem{P} \cup \celem{T}$ and a target $t \in \celem{P} \cup \celem{T}$ with $(s,t) \in \celem{A}$ is as follows. If $(s,d) \in \delem{F}$ then $\celem{E}((s,d))=\var_\bullet$, otherwise:
  \begin{equation*}
\celem{E}((s,d)) = \left\{
\begin{array}{ll}
  %\circ & \text{ if } (s,d) \in \delem{F} \\
  %v^r & \text{ if } (s,d) \in \celem{A} \setminus \delem{F} \land s \in \delem{T} \land v=\xi(d) \land v \in \delem{\reads}(s)\\
  %v^r & \text{ if } (s,d) \in \celem{A} \setminus \delem{F} \land d \in \delem{T} \land v=\xi(s) \land v \not\in \delem{\writes}(d)\\
  %v^w & \text{ if } (s,d) \in \celem{A} \setminus \delem{F} \land s \in \delem{T} \land v=\xi(d) \land v \in \delem{\writes}(s)\\
  %\circ & \text{ if } (s,d) \in \delem{F}, \text{ otherwise:} \\
  v^r & \text{ if } d \in \delem{T} \text{ and } s=\xi(\var); \\
  v^r & \text{ if } s \in \delem{T} \text{ and } d=\xi(\var) \text{ and } \var\not\in \delem{\writes}(s);\\
  v^w & \text{ if } s \in \delem{T} \text{ and } d=\xi(\var) \text{ and } \var \in \delem{\writes}(s);\\
\end{array}
  \right.
\end{equation*}
\smallskip

  The first case refers to arcs of the \cpn that are also present in the original \net (e.g. in the set of arcs $F$); the places involved in these arcs contain tokens with no value associated and, which we represent by $\circ$, and thus the arcs are annotated with the $\var_\bullet$ variable. %, whose only possible value if $\circ$.
  The remaining cases refer to arcs connecting the variable places for each $\var\in\delem{\Vars}$ to a transition $t\in\celem{T}$. If $\var$ is written by $t$ then the incoming arc $(\xi(\var),t)$ and the outgoing arc $(t,\xi(\var))$ are annotated with $\varr$ and $\varw$, respectively. This allows the token holding the value of $\var$ to change value when returned back to $\xi(\var)$.
  If instead $\var$ is not written by $t$ then both arcs are annotated with the same inscription  $\varr$, guaranteeing that the value of token does not change. \\
  %
  % The second case refers to arcs from a variable place $\xi(\var)$ for a variable $\var$: the arc takes on inscription $\varr$ and, as a result, the token holding the current value of $\var$ is consumed and the variable $\varr$ takes on the current value (i.e., stored in $\xi(\var)$).   The third and fourth case refer to arcs from a transition $s$ to a variable place $\xi(\var)$ for variable $v$. The difference between the two cases is whether $\var$ is read or written: in the latter case a variable $\varw$ is assigned a (new) value satisfying the guard $\celem{G}(t)$ and a token with such value is created in $\xi(\var)$; in the former, a token with value $\varr$ is produced in the variable place $\xi(\var)$.
%\item[Initialization.]
%
\textbf{Initialization.} Let $\delem{M_I}$ be the initial marking of the \net. Places that are also in the \net take on the same number of tokens as in the \net, whereas each variable place $\xi(\var)$ is initialized with a token holding the value specified by the initial \svassignment of the \net. Namely, $\celem{I}(p)=[\circ^{\delem{M_I}(p)} ]$ if $p \in \delem{P}$, i.e., $p$ is a place in the original net, otherwise $\celem{I}(p)= [ \delem{\varInit}(\var) ]$ where $p=\xi(\var)$.

This translation is correct: a \net is sound if and only if its translation to \cpn is sound, and it also allows one to  leverage on standard  techniques~\cite{Aalst1998workflow}. Specifically, in this paper we resort on building and analysing the reachability graph of the \cpn,
on which the conditions as in Definition~\ref{def:properties} can be checked as follows: properties P1 and P2 can be assessed by verifying that, for any state of the reachability graph, it is possible to reach a final marking. Property P3 can be checked by assessing whether the reachability graph contains at east one edge for every transition.

%This is due to the fact that variables can possibly take value from infinite domain, making the reachability graph a graph with a infinite number of states. However, if the guards are always in a certain form, the infinite domains of the values that must be considered in the construction of the reachability graph can be discretized in a finite set of representative values. This ensures that the reachability graph is finite and, hence, the soundness of the \cpn and, consequently, of the original \net can be verified.

\begin{example}
Figure~\ref{fig:CPNfromDPN-preDiscr} illustrates how the working example is translated into a \cpn. The red the green elements implements the operations of reading and updating (i.e.\ writing) of the variables $\varok$ and $\varamount$, respectively.
\end{example}
\begin{figure}[hp!]
  \centering
  \includegraphics[height=0.8\textwidth, angle=270]{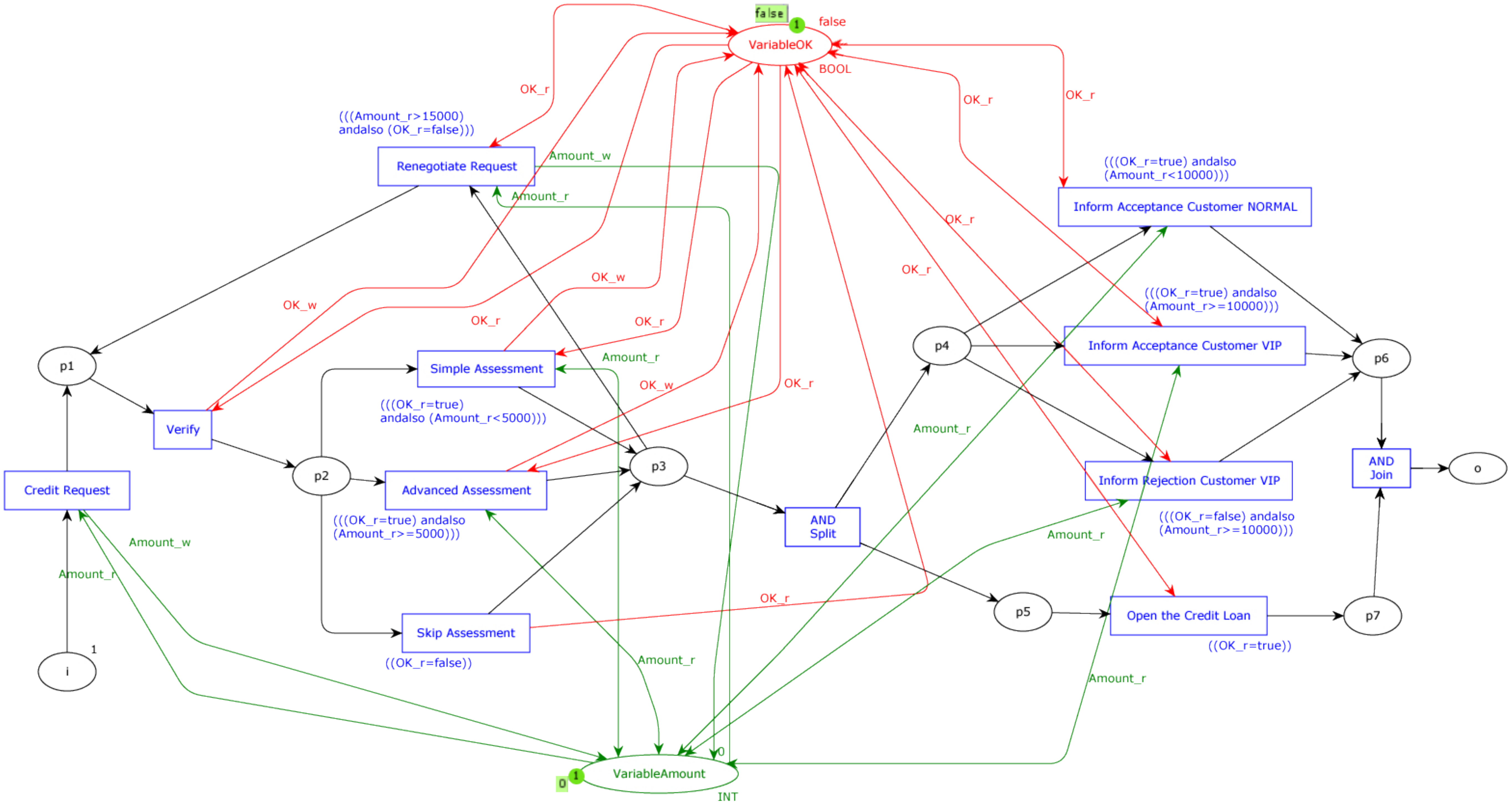}
  \caption{Translation of \net in Figure~\ref{fig:net} as \cpn. This \cpn is internally generated by our plug-in. Double-headed arcs are a shortcut to indicate that there are two arcs with the same inscription in either of directions.
  }
  \vspace{-.1in}
  \label{fig:CPNfromDPN-preDiscr}
\end{figure}

\subsection{Taming Infinity via Representatives}
\label{sec:translation-representatives}

However, although the translation is correct, it is easy to see that the reachability graph can have infinitely many distinct states. There source of infiniteness is twofold: on the one hand, the original \net itself and thus the \cpn can have an unbounded number of tokens; on the other hand, the process variables of the original \net determine color sets in the \cpn over possibly infinite domains. While the former can be tackled by standard techniques, the infiniteness of the process data makes them ineffective, and it does not give us any insight on whether soundness can be actually verified and, if so, how.

\begin{definition}[Constants of the process]
The set of \emph{constants} $\Consts{\var}\subset\varDomain$ related to a typed variable $\typevar\in\Vars$ of a \net is defined as the set of all the values $k$ such that either $\varr \odot k$ or  $\varw \odot k$ appears in any guard of any $t\in T$, with $\odot\in \Preds$.
\end{definition}

Observing that each such set $\Consts{\var}$ is finite and ordered, for each of these  we partition the universe $\Universe$ into $2\times\card{\Consts{\var}}+1$ intervals of values in which $\Universe$ can be partitioned wrt $\var$, and for each elect a \emph{representative}, which can be chosen arbitrarily among the values in the interval.
%%
%First,
To correctly handle the case in which the domain $\varDomain$ of a variable $\var$ has no minimal or maximal elements, we define the set $\Consts{\var}^+$ as $\Consts{\var}$ with either or both of these two elements added, when needed.
Hence the set of representatives for $\var\in\Vars$ is computed as:
%%
%\[
$\Repres{\var}:=\{ x\in\varDomain \mid x\in \Consts{\var} \text{ or } x= pick(x_1,x_2) \text{ for consecutive } x_1,x_2\in\Consts{\var}^+\}$
%\]
%
%\noindent
where $pick$ is a deterministic function returning a representative value in the specified interval, excluding the endpoints.\footnote{For dense domains such as real numbers such intervals are always nonempty, whereas for non-dense domains they might be empty. In this case, we consider $pick$ undefined.}
%, so that no representative is added to $\Repres{\var}$ for empty intervals.}.
For a given value $x$ in the original domain $\varDomain$, we denote its representative as $rep(x)$, namely $rep(x):=x$ iff $x\in \Consts{\var}$, otherwise $rep(x)=y$ implies both $y\in(x_1,x_2)$ and $x\in(x_1,x_2)$. For $\bot$, we define $rep(\bot):=\bot$.

Let $\Repres{} := \set{\Repres{{\var}_1},\ldots, \Repres{{\var}_q}}$. We define a \svassignment \emph{restricted to} $\Repres{}$ as a function $\varStateR : \Vars \rightarrow \cup_{\var}\Repres{\var}$, with the restriction that $\varStateR(\var)\in  \Repres{\var}$ for any $\var$.
Given a \svassignment $\varState$ on the original domain $\varDomain$ of any variable $\var\in\Vars$, we \emph{compute} its restriction as
%%
%\[
$\restrict{\varState}(\var) := rep(\varState(\var))$.
%\]

% !TEX root = main.tex

By considering a finite number of representative values, we can verify the soundness of a \net by checking the soundness of the corresponding \cpn if one restricts the values which can be assigned to each $\var$ to the set $\Repres{\var}$.
As we are going to show, although this cannot guarantee  the reachability graph of the \cpn to be finite-state in general, it suitably eliminates the infiniteness originating from the process data.
%This ensures that the reachability graph is finite and, hence, the soundness checking is decidable.
%%
To ensure this, we need to add further constraints to the \cpn $\acpn$ as follows.
For each variable $\var \in \delem{\Vars}$ in the \net, we add an additional place $\rho(\var)$ to the set places $\celem{P}$ of the \cpn, which is meant to represent the restricted set of possible values of $\var$, namely $\Repres{\var}$. To this end, $\rho(\var)$ is assigned the same colorset as that of the variable place $\xi(\var)$, and it holds one token for each possible representative value in $\Repres{\var}$. This is achieved through the initialisation function of $\acpn$, by imposing $I(\rho(v))=\uplus_{x \in \Delta_v} [x]$. Then, for any transition $t \in \celem{T}$ and for each variable $\var \in \celem{\writes}(t)$, the representative value held by one token in $\rho(\var)$ is used to update the value of the token in the variable place $\xi(v)$.

\begin{figure}[p]
\centering
\includegraphics[height=0.7\textwidth,angle=270]{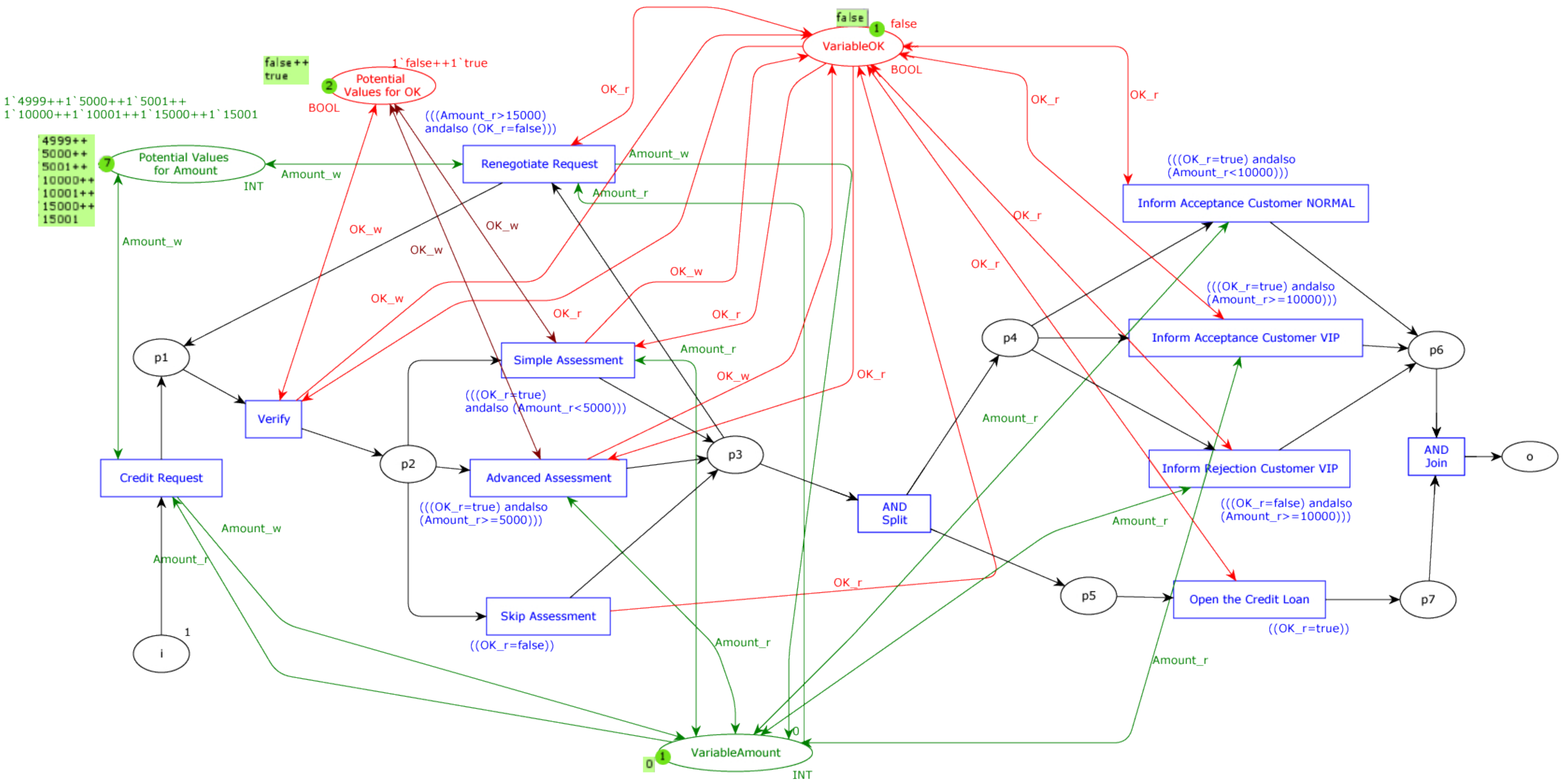}
  \caption{The extension of the \cpn in Figure~\ref{fig:CPNfromDPN-preDiscr} using representative values. Transitions also present in the original \net (black edges) are removed, and double arcs are a shortcut to indicate that there are two arcs with the same inscription in either of directions. }
  \label{fig:CPNfromDPN}
\end{figure}

\begin{example}
Consider, e.g., the transition \emph{Credit Request} in the model in Figure~\ref{fig:CPNfromDPN-preDiscr}.
This transition writes the integer variable $\varamount$. If we inspect all the guards, it is easy to see that the set of constants related to $\varamount$ is the set $\Consts{\varamount} = \set{5000, 10000, 15000}$, from which we select the set of representatives $\Repres{\varamount}=\set{4999,5000,5001,10000,10001,15000,15001}$ by including an \emph{arbitrary} value for each interval (e.g., in this case, $5001$ was arbitrarily chosen to represent all the values in the interval $(5000,10000)$), and a token for each element of $\Repres{\varamount}$ is created in $\rho(\varamount)$. As it can be seen in Figure~\ref{fig:CPNfromDPN}, which depicts the resulting \cpn $\acpnrep$, $\rho(\varamount)$ is called \textit{Potential values for Amount} and its tokens can be used as possible values for the variable $Amount\_w$, which the transition \emph{Credit Request} produces in the variable place $\xi(\varamount)$, there called \textit{VariableAmount}.
\end{example}

More formally, we add two arcs to $\celem{A}$: an arc $(t,\rho(v))$ from $t$ to the newly-introduced place $\rho(v)$ and a second arc $(\rho(v),t)$,
%with $\celem{N}(t,\rho(v))=(t,\rho(v))$ and $\celem{N}(\rho(v),t))=(\rho(v),t)$.
and define the expression function $\celem{E}$ so that, for transition $t \in \delem{T}$ and $\var \in \delem{\writes}(t)$, $\celem{E}((\rho(\var),t))=\celem{E}((\rho(\var),t))=v^w$.

\subsection{Correctness of the Translation}
\label{sec:translation-correctness}

% !TEX root = main.tex

We now discuss the correctness of the approach, showing that the \cpn defined above preserves the soundness properties of the original \net.
In order to do so, we could show that the \cpn built in the previous section (namely obtained by translating the original \net $\anet$ first into a \cpn $\acpn$ and thus into $\acpnrep$) preserve soundness. This would be hard, as it implies not only comparing a \net with a \cpn that is syntactically very different but also handling infinite domains for case variables. Instead, we first restrict the traces of $\anet$ so as to describe a new \net $\anetrep$ through the formal notion of restriction of variable domains introduced before, and only then show that there is indeed a tight relationship between such $\anetrep$ and the \cpn $\acpnrep$ introduced in the last section.
This is illustrated in Figure~\ref{fig:approach}.

\begin{figure}[]
\centering
\resizebox{.33\textwidth}{!}{
% !TEX root = ../../main.tex

\begin{tikzpicture}[node distance = 3cm]

\node[] (0) {$\net$ $\anet$};
\node[right of=0] (1) {$\cpn$ $\acpn$};
\node[below of=0, yshift=1.9cm] (2) {$\net$ $\anetrep$};
\node[right of=2] (3) {$\cpn$ $\acpnrep$}; 
 
\draw[<->] (0) edge (1);
\draw[<->, very thick] (2) edge (3);
\draw[<->, very thick] (0) edge (2);
\draw[<->] (1) edge (3);
 
\end{tikzpicture}
}
  \caption{An intuitive diagram depicting our approach. First, we want to prove that $\anet$ is (data-aware) sound iff $\acpn$ is (data-aware) sound. However, to tame  infiniteness, we will show that $(i)$ $\anetrep$ constitutes a correct abstraction of $\anet$, namely that they are trace-equivalent, and then $(ii)$ that a similar property holds between $\anetrep$ and $\acpnrep$. As a result, we can analyse $\acpnrep$ in order to assess the soundness of $\anet$. We will then argue that this conclusion is in fact not limited to data-aware soundness.}
\label{fig:approach}
\end{figure}
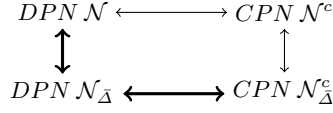

\begin{comment}
In order to do so, we could show that the \cpn built in the previous section (namely obtained by translating the original \net $\anet$ first into a $\acpn$ and thus into $\acpnrep$) preserve soundness. This would be hard, as it implies comparing a \net with a \cpn that is syntactically very different. Instead, we first restrict the traces of $\anet$ so as to describe a new \net $\anetrep$ through the formal notion of restriction of variable domains introduced before, and only then show that there is indeed a tight relationship between such $\anetrep$ and the \cpn $\acpnrep$ introduced in the last section.

This is illustrated in Figure~\ref{fig:approach}.
\end{comment}
%%
In the previous section we have shown how any state assignment $\varState$ can be restricted to the state assignment $\restrict{\varState}$ which only selects representatives from $\Repres{\var}$ for each variable $\var\in\Vars$.
Intuitively, this allows us to redefine the domain associated to $\var$ as $\Domain:=\tup{\Repres{\var},\Preds}$, and to consider a new \net $\anetrep$ that is as the original \net $\anet$ but in which states are of the form $(M, \restrict{\varState})$, including the initial state $(M_I,\restrict{\varState_I})$.
Similarly, transition firings are as in $\anet$, with the  difference that given a transition firing $(M,\varState) \goto{t,\bind} (M',\varState')$ in $\anet$, the state assignment for any $\var\in\Vars$ is $\restrict{\varState}'(\var)=\restrict{\varState}(\var)$ if $\varname\not\in \writes(t)$ and $\restrict{\varState}'(\var)=rep(\bind(\var))$ otherwise.

The following theorem shows that the abstraction step from $\anet$ to $\anetrep$ depicted in Figure~\ref{fig:approach} preserves trace-equivalence.

\begin{theorem}
\label{thm:first}
Given a \net $\anet$, then $\anetrep$ is trace-equivalent to $\anet$.
\label{theorem:anet_anetrep_equiv}
\end{theorem}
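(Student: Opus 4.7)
The statement is a two-way trace-equivalence. The underlying Petri net $(P,T,F)$ and initial marking $M_I$ are shared between $\anet$ and $\anetrep$, so any sequence of transitions induces the same marking trajectory in both nets; the whole argument therefore reduces to matching the data component step by step. My plan is to first isolate the semantic content of the representative construction in a guard-preservation lemma, and then to use it in the two directions separately.

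The key lemma I would prove is: for every guard $\phi\in \Expr(\Vars)$ used in the net and every variable assignment $\bind$ over the original domains, $\phi_{[\bind]} = \phi_{[\bar{\bind}]}$, where $\bar{\bind}(v) := \rep(\bind(v))$. The proof proceeds by structural induction on $\phi$. The only nontrivial case is an atom of the form $v \odot k$ with $k$ a constant and $\odot\in\Preds$: if $\bind(v) \in \Consts{v}$ then $\rep(\bind(v)) = \bind(v)$ and the claim is immediate; otherwise $\bind(v)$ lies in some nonempty open interval $(x_1,x_2)$ determined by two consecutive elements of $\Consts{v}^+$, and by construction $\rep(\bind(v))$ lies in the same interval. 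Since $k\in \Consts{v}$, $k$ cannot fall strictly between $x_1$ and $x_2$, so $\bind(v) \odot k$ and $\rep(\bind(v)) \odot k$ receive the same truth value for every $\odot\in\Preds$. The atom $\phi = v$ (which asks $\bind(v) \neq \bot$) is preserved because $\rep(\bot)=\bot$ and $\rep$ maps defined values to defined values; Boolean connectives propagate the equivalence.

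For the direction $\anet \Rightarrow \anetrep$ I would proceed by induction on run length. From a run $(M_I,\varState^0) \goto{t^1,\bind^1} \cdots \goto{t^n,\bind^n} (M^n,\varState^n)$ of $\anet$, I build the candidate run $(M_I,\restrict{\varState^0}) \goto{t^1,\bar{\bind}^1} \cdots \goto{t^n,\bar{\bind}^n} (M^n,\restrict{\varState^n})$ of $\anetrep$ by setting $\bar{\bind}^i(v) := \rep(\bind^i(v))$ for each read or written variable. The input-place condition and marking update are unchanged. The read condition $\bar{\bind}^i(v^r) = \restrict{\varState^{i-1}}(v)$ holds by the inductive identity $\restrict{\varState^{i-1}}(v) = \rep(\varState^{i-1}(v)) = \rep(\bind^i(v^r))$; the guard condition is preserved by the lemma; and the successor $\restrict{\varState^i}$ matches the semantics of $\anetrep$ by direct computation from the definition of restricted state assignment. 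The opposite direction is immediate: since $\Repres{v}\subseteq \varDomain$ for every $v$, any legal run of $\anetrep$ is already a legal run of $\anet$ when the bindings are reinterpreted as values in the original domain, and the two runs trivially share the same transition sequence.

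The main obstacle is the guard-preservation lemma, and specifically the argument that constants from guards cannot land strictly inside an interval between consecutive elements of $\Consts{v}^+$. One has to be careful about the $\pm\infty$ sentinels possibly added to $\Consts{v}^+$ and about non-dense domains where some of these open intervals may be empty; in the latter case $\rep$ is never invoked on a value from that interval, so no representative is needed and the concern vanishes.
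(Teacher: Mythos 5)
Your core argument is the same as the paper's: both hinge on the observation that an atom $\var\odot k$ cannot change truth value when $\bind(\var)$ is replaced by $\rep(\bind(\var))$, because $k\in\Consts{\var}$ cannot lie strictly between two \emph{consecutive} elements of $\Consts{\var}^+$, plus the fact that an empty interval can never contain a value that needs a representative. The paper folds this into a contradiction-style case analysis inside the induction on run length (cases $(a)$, $(b1)$, $(b2)$), whereas you extract it as a standalone guard-preservation lemma $\phi_{[\bind]}=\phi_{[\bar\bind]}$ proved by structural induction on $\phi$. Your packaging is cleaner and makes the forward direction essentially mechanical; this is a presentational improvement rather than a different route.

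There is, however, one step that does not go through as stated: the claim that the backward direction is ``immediate'' because $\Repres{\var}\subseteq\varDomain$, so a legal run of $\anetrep$ is \emph{already} a legal run of $\anet$ under reinterpretation of the bindings. A legal run of $\anet$ must start from $(M_I,\varInit)$, while a legal run of $\anetrep$ starts from $(M_I,\restrict{\varInit})$; if $\varInit(\var)\notin\Repres{\var}$ for some $\var$ (so that $\rep(\varInit(\var))\neq\varInit(\var)$), then the read condition $\bind(\varr)=\varState(\var)$ fails in $\anet$ for the first transition that reads $\var$ before it is ever written, and the literal reinterpretation is not a legal $\anet$-run. The fix is exactly the symmetric use of your guard lemma: construct the $\anet$-run by keeping the $\anetrep$ bindings for written variables but binding each read variable to the actual current $\anet$-state value, maintain inductively that the $\anet$ state restricts to the $\anetrep$ state under $\rep$, and invoke the lemma to conclude the guards remain satisfied. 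This is what the paper glosses as direction $(2)$ being ``analogous.'' In the running example $\varInit\equiv\bot$ and $\rep(\bot)=\bot$, so the issue is invisible there, but the theorem is stated for arbitrary initial \svassignment{s}, so the backward direction needs the lemma just as the forward one does.
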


The intuition behind this result is that for any possible \emph{legal} process trace $\sigma_1=\tup{(t^1,\bind^1),\ldots, (t^n,\bind^n)}$ of $\anet$ we compare the run
%$\tau_1= (M_I,\varState_I) \goto{t^1,\bind^1} (M^1,\varState^1) \goto{t^2,\bind^2} \ldots \goto{t^n,\bind^n} (M^n,\varState^n)$
$\tau_1= (M_I,\varInit) \goto{\sigma_1} (M_1^n,\varState^n)$
of $\anet$  with the run
%$\tau_2=(M_I,\restrict{\varInit}) \goto{t^1,\restrict{\bind}^1} (M^1,\restrict{\varState}^1) \goto{t^2,\restrict{\bind}^2} \ldots \goto{t^n,\restrict{\bind}^n} (M^n,\restrict{\varState}^n)$
$\tau_2= (M_I,\restrict{\varInit}) \goto{\sigma_2} (M_2^n,\restrict{\varState}^n)$
of $\anetrep$, with $\sigma_2=\tup{(t^1,\restrict{\bind}^1),\ldots, (t^n,\restrict{\bind}^n)}$, so that these runs are trace equivalent, where $\restrict{\varState}^i$ and $\restrict{\bind}^i$ are the restrictions of $\varState^i$ and $\bind^i$ with respect to $\Repres{}$, for each $i\in[1,n]$. Then, we show how for every legal transition firing $(t,\bind)$ from the last state of $\tau_1$, its restriction $(t,\restrict{\bind})$ is legal from the last state of $\tau_2$ and that $\tau_1\goto{(t,\bind)}(M',\varState')$ is still trace-equivalent to $\tau_2\goto{(t,\restrict{\bind})}(M'',\varState'')$.
The proof proceeds by induction on the length of runs, as the claim trivially holds for runs of length $0$: it is easy to show that if the claim does not hold then either $\tau_1\goto{(t,\bind)}(M',\varState')$ is not legal or it is indeed possible to simply consider the restriction of $\bind$ and thus obtain a run $\tau_2$ and $\tau_1\goto{(t,\bind)}(M',\varState')$ for which the claim holds. Similarly for the other direction.

\begin{proof}
First, observe that the claim holds for runs of length $0$.  Then, assume it holds also for runs of length $n$ but not for length $n+1$, and consider $\tau_1$ and $\tau_2$ as above.
Then it means that for some legal firing $(t,\bind)$ from the last state of $\tau_1$ we cannot find a legal firing $(t,\restrict{\bind})$ from the last state of $\tau_2$ so that the condition above holds. Or vice-versa. We address these two cases. First, Recall that trace-equivalent runs traverse the same markings, which implies $M_1^n=M^n_2$, hence we simply write $M^n$.

(1): If $(t,\bind)$ is legal from $\tau_1$ but $(t,\restrict{\bind})$ is not legal from $\tau_2$, then either $(a)$ $\restrict{\bind}$ cannot be defined as it is not possible to select a representative for $\bind(\var)$ for some $\var\in\writes(t)$ or $(b)$ the transition firing $(t^{n+1},\restrict{\bind}^{n+1})$ is not legal from $(M^n,\restrict{\varState}^n)$. If $(a)$ is true then $\bind^{n+1}(\var)=x$ but $x\not\in\Repres{\var}$. By definition, this means that $x\not\in\Consts{\var}$ and that there are no two consecutive $x_1,x_2\in\Consts{\var}$ with $x\in(x_1,x_2)$. This is only possible if the open interval $(x_1,x_2)$ is empty, which contradicts the fact that $\bind^{n+1}(\var)=x$. If $(b)$, by definition we have that, considering $\phi=\guard(t)$, either $(b1)$ $\phi_{[\restrict{\bind}^{n+1}]}$ is not satisfied, or $(b2)$ it is not true that $M^{n} \goto{t^{n+1}} M^{n+1}$. If $(b1)$ is true then for some $\var\in(\VarsR\cup\VarsW)$ either $\phi=\var$ and $\restrict{\bind}^{n+1}(\var)=\bot$ or $\phi$ is of the form $\var\odot k$ but it is not the case that $\odot(x,k)$ for $x=\restrict{\bind}^{n+1}(\var)$. In the former case it follows that also $\bind^{n+1}(\var)=\bot$ and $\phi_{[\bind^{n+1}]}\neq\true$, while the latter is not possible because if $x$ is a constant in $\Consts{\var}$ then it must be $\phi_{[\bind^{n+1}]}\neq \true$ as $x=rep(x)$, whereas if it is a representative value in the interval $(x_1,x_2)$ then there must exist another constant in $\Consts{\var}\in(x_1,x_2)$ that is also in $\reads(t)$, which implies that these are not consecutive constants, unless of course it is also false that $\odot(\bind^{n+1}(\var), k)$. Finally, if ($b2$) is true then $\tau_2$ is not legal.
%%
%Hence, if $(t,\restrict{\bind})$ is legal from $\tau_2$ but the resulting runs of length $n$+1 are not equivalent, then there must be another index $k>n$ for which the reasoning above applies, and thus a contradiction is reached.
%%

(2:) The proof for the other direction is analogous, proving that for every legal firing $(t,\restrict{\bind})$ from the last state of $\tau_2$ in there exists a legal firing $(t,\bind)$ from the last state of $\tau_1$ so that $\tau_1\goto{(t,\bind)}(M,\varState')$ is still equivalent to $\tau_2\goto{(t,\restrict{\bind})}(M,\varState'')$.
%(2:) to (\ref{eq1})): Given a legal trace $\restrict{\sigma}'=\restrict{\sigma}\cdot(t^j,\restrict{\bind}^j)$ of length $j$, if the corresponding run as in (\ref{eq1}) is not a run in $\dpnLang(N)$ or it is not equivalent to (\ref{eq1}), then either $(a)$ for some variable $\Vars$ there is no $\bind^j$ so that  $\restrict{\bind}^j(\var)=\rep(\bind^j(\var))$ or $(b)$ $(M^{j-1},\varState^{j-1}) \goto{t^j,\bind^j} (M^j,\varState^j)$ is not legal given $\phi=\guard(t^j)$. Case $(a)$ is not possible by definition of restriction, whereas if $(b)$ is true then either $(b1)$ $\phi_{[\bind^j]}\neq\true$ or $(b2)$ it is not true that $M^{j-1} [ t^j \rangle M^j$. If $(b1)$ then for some variable $\var$ either $\phi=\var$ and $\bind^j(\var)=\bot$ or $\phi=\var\odot k$ but $\phi_{[\bind^j]}\neq\true$ for every possible value $x$ so that $rep(x)=\restrict{\bind}^j(\var)$. Both cases are clearly not possible unless $\phi_{[\restrict{\bind}^j]}\neq\true$. Finally, $(b2)$ falsifies the assumption that $\restrict{\sigma}'$ is legal.
\end{proof}

\medskip
We now address the relationship between $\anetrep$ and the corresponding \cpn $\acpnrep$, built from $\anetrep$ following the construction illustrated in Section~\ref{sec:translation-representatives}. As depicted in Figure~\ref{fig:approach}, our goal is to show that  the \cpn $\acpnrep$ captures all and only the possible runs of $\anetrep$, so that we can indeed analyse $\acpnrep$ in ProM, as explained in the next section, to verify properties of the original \net $\anet$.

First, with a little abuse of terminology, we extend the notion of trace-equivalence also for comparing \net{s} and \cpn{s}.

\begin{definition}[Trace-equivalence between \net{s} and \cpn{s}]We say that a \net run $\tau=(M_I,\varState_I)\goto{t^1,\bind^1}(M^1,\varState^1)\goto{t^2,\bind^2}\cdots\goto{t^n,\bind^n}(M^n,\varState^n)$ is \emph{trace-equivalent} to a \cpn run $\tau_c=M_{Ic}\goto{t_c^1,\gamma^1}M_c^1\goto{t_c^2,\gamma^2}\cdots\goto{t_c^n,\gamma^n}M_c^n$ iff $t^i=t_c^i$ for each $i\in[1,n]$, namely if the runs perform the same transitions ($M_{Ic}$ is the initial marking of the \cpn).
\end{definition}

Similarly, a \net $\anet$ and a \cpn $\acpn$ are trace-equivalent if $(a)$ for every legal run $\tau$ of $\anet$ there exists a trace-equivalent run $\tau_c$ of $\acpn$ and $(b)$ vice-versa.

\begin{theorem}
For any \net $\anet$, the corresponding $\anetrep$ and the \cpn $\acpnrep$ defined as in the previous section are trace-equivalent.
\label{th:equiv_dpn_cpn}
\end{theorem}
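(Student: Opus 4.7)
The plan is to prove both directions of trace-equivalence by induction on the length of runs, leveraging a natural bijective correspondence between states of $\anetrep$ and markings of $\acpnrep$. Specifically, I would associate to each state $(M,\restrict{\varState})$ of $\anetrep$ the \cpn marking $M_c$ where $M_c(p) = [\circ^{M(p)}]$ for every $p \in \delem{P}$, $M_c(\xi(\var)) = [\restrict{\varState}(\var)]$ for every $\var \in \Vars$, and $M_c(\rho(\var)) = \uplus_{x \in \Repres{\var}} [x]$. The last component is, by construction, invariant throughout any run, because the arcs linking $\rho(\var)$ to any transition $t$ writing $\var$ carry the same expression $v^w$ on both directions, hence every token drawn from $\rho(\var)$ is returned. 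The initial marking $I$ of $\acpnrep$ clearly matches the state $(M_I,\restrict{\varInit})$.

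For the inductive step in the forward direction, suppose the runs up to length $n$ are trace-equivalent under the above correspondence, and consider a legal firing $(M^n,\restrict{\varState}^n) \goto{t,\restrict{\bind}} (M^{n+1},\restrict{\varState}^{n+1})$ of $\anetrep$. I would construct the \cpn binding $\gamma$ by setting $\gamma(\var^r) = \restrict{\bind}(\varr) = \restrict{\varState}^n(\var)$ for each $\var \in \reads(t)$, $\gamma(\var^w) = \restrict{\bind}(\varw)$ for each $\var \in \writes(t)$, and $\gamma(\var_\bullet) = \circ$ for the dummy variable. Legality of this \cpn firing then reduces to three checks: (i) the workflow preconditions $M^n(p) > 0$ for $p \in \pre{t} \cap \delem{P}$, which hold because they coincide with the corresponding preconditions in $\anetrep$; (ii) the tokens drawn from $\xi(\var)$ and $\rho(\var)$ are available, which holds because $\xi(\var)$ always carries exactly the current value $\restrict{\varState}^n(\var)$ and $\rho(\var)$ contains all elements of $\Repres{\var}$ (in particular $\restrict{\bind}(\varw)$, since $\restrict{\bind}(\varw) \in \Repres{\var}$ by construction of $\anetrep$); and (iii) the guard $\celem{G}(t) = \guard(t)$ evaluates to true under $\gamma$, which is immediate since $\gamma$ assigns to $\var^r,\var^w$ exactly the values used by $\restrict{\bind}$. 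The resulting successor marking can then be checked to coincide with the one associated to $(M^{n+1},\restrict{\varState}^{n+1})$ via the correspondence.

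For the converse direction, given a legal \cpn firing $M_c^n \goto{t,\gamma} M_c^{n+1}$, I would define the \net binding $\restrict{\bind}$ by reading off $\restrict{\bind}(\varr) = \gamma(\var^r)$ and $\restrict{\bind}(\varw) = \gamma(\var^w)$, and then run through the legality conditions of Definition of legal transition firing in reverse: the token in $\xi(\var)$ forces $\gamma(\var^r) = \restrict{\varState}^n(\var)$, so $\restrict{\bind}$ correctly reads the current assignment; $\gamma(\var^w)$ is a representative value, since it must be a token of $\rho(\var) = \Repres{\var}$; the guard evaluates identically on the two sides; and the marking update on places $p \in \delem{P}$ follows the Petri net rule because arcs not touching variable places are annotated with $\var_\bullet$ (which simply moves one black token).

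The main obstacle I anticipate is not any single deep step, but making precise the bookkeeping of the three-way correspondence between: (i) the workflow marking $M$, (ii) the \svassignment $\restrict{\varState}$ as carried by the tokens of the $\xi(\var)$ places, and (iii) the representative pool $\rho(\var)$ that enforces $\restrict{\bind}(\varw) \in \Repres{\var}$. Care is needed to justify that $\rho(\var)$ is invariant and hence can always supply any representative value, and to check the boundary case in which $\writes(t) \cap \reads(t) \neq \emptyset$, where both a $v^r$ and $v^w$ arc are attached to the same variable place so that the token consumed via $v^r$ is replaced by a freshly chosen $v^w$ token. Once this correspondence is set up cleanly, the induction carries through and trace-equivalence between $\anetrep$ and $\acpnrep$ follows.
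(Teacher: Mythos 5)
Your proposal is correct and follows essentially the same route as the paper's proof: an induction on run length built on the state--marking correspondence $(M,\varState)\leftrightsquigarrow M_c$ (with $M_c(p)=[\circ^{M(p)}]$ and $M_c(\xi(\var))=[\varState(\var)]$), constructing the \cpn binding $\gamma$ from $\restrict{\bind}$ in the forward direction and reading it off in reverse for the converse. Your explicit treatment of the invariance of the $\rho(\var)$ pool and of the case $\var\in\reads(t)\cap\writes(t)$ is a welcome tightening of bookkeeping that the paper leaves implicit, but it does not change the argument.
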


It is easy to see that the extra place $\rho(\var)$, included in $\acpnrep$ for any $\var\in\Vars$, makes it possible to generate a reachability graph composed of runs that do not distinguish distinct values of variables as long as these are represented by the same representative value in $\Repres{\var}$. However,
one main difference in addressing the trace-equivalence between a run of a \net $\anetrep$ and a run of the corresponding \cpn $\acpnrep$, with respect to the same notion between two \net{s} as for the previous theorem, is that their respective markings are structurally  different. In $\anet$ markings are of the form $M\in\MS(P)$ whereas for $\acpnrep$ they are computed by a function of the form $M_c: P \rightarrow \MS(\Sigma \cup \set{\bullet})$.
We thus need to define a correspondence relation between these two notions of markings, and more precisely between states of a \net (the same applies to $\anet$ and $\anetrep$) and markings of the corresponding \cpn $\acpnrep$, defined as follows. Given a state $(M,\varState)$ of a \net, for any place $p\in P$ we have that $M_c(p)=[\circ^{M(p)}]$ and for any $\var\in \Vars$ we have $M_c(\xi(\var))=[\varState(v)]$. Similarly, given a marking $M_c$ of $\acpnrep$, the corresponding state $(M,\varState)$ of the \net is so that for any place $p\in P$ we have $M(p)=\card{M_c(p)}$ and for any $\var\in\Vars$ we have $\varState(\var)=M_c(\xi(\var))$.
With such a notion of correspondence at hand, which we denote by writing $(M,\varState)\leftrightsquigarrow M_c$, we prove our result.

\begin{proof}
(1): We proceed by induction on the length of runs $\tau_1=(M_I,\restrict{\varState_I}) \goto{\sigma_1} (M^n,\restrict{\varState}^n)$ of $\anetrep$, with $\sigma_1=(t^1,\restrict{\bind}^1)\cdots(t^n,\restrict{\bind}^n)$.
First, it is easy to see that $(M_I,\varState_I)\leftrightsquigarrow M_{Ic}$ by construction (see Section~\ref{sec:translation-general}). Then, assume that $(M,\varState)\leftrightsquigarrow M_c$ and that there exists a legal transition firing $(t,\bind)$ from such state so that $(M,\varState)\goto{t,\bind}(M',\varState')$ but there is no firing $(t,\gamma)$ of $\acpnrep$ such that $M_c\goto{t,\gamma}M'_c$ and $(M',\varState)\leftrightsquigarrow M'_c$. Then, either ($a$) any possible $(t,\gamma)$ is not legal from $M_c$ or ($b$) it is the case that $(M',\varState)\not\leftrightsquigarrow M'_c$. The former case is not possible. First, $t$ must be enabled in $M_c$ because $M_c(p)=[\circ^{M(p)}]$ for every $p\in P$ and $M_c(\xi(\var))=[\varState(v)]$ for any $\var\in \Vars$, with arcs between $t$ and places $\xi(\var)$ and $\rho(\var)$ as described in the definition of $\acpnrep$. Second, we can pick $\gamma$ such that $\gamma(\var)\in M_c(\xi(\var))$ for all $\var=E(a)$, which is equivalent to say that $\gamma(\var)=\varState(\var)$ for read variables $\var\in \reads(t)$ and $\gamma(\var)=\bind(\var^w)$  for written variables $\var\in \writes(t)$. This must be possible or otherwise $(t,\bind)$ would not be legal. Third, and for the same reason, if such binding $\gamma$ is so that the guard $\phi$ is not satisfied, then the same would hold for $\bind$ (which agrees with $\varState$ for read variables, namely $\bind(\var^r)=\varState(\var)$ for each $\var\in \reads(t)$), which would imply that $(t,\bind)$ is not legal from $(M,\varState)$. Since $(M,\varState)\leftrightsquigarrow M_c$ it follows that ($b$) is also not possible, as $M\goto{t}M'$ and the binding $\gamma$ as above is consistent with $\varState$ and $\bind$: specifically, for places $p\in{}^\bullet t$ we have $M'_c(p)=M_c(p)\setminus [\gamma (E(a_{(p,t)}))]$, and for places $p\in t^\bullet$ we have $M'_c(p)=M_c(p)\uplus [\cpnbind(E(a_{(t,p)}))]$, where $\gamma$ agrees with the marking $M_c$, namely $\gamma(E(a_{p,t}))$ selects a value from $M_c(p)=[\circ^{M(p)}]$ when $p\in P$, and $\gamma(E(a_{p,t}))$ selects a value from $M_c(p)=[\varState(\var)]$ when $p = \xi(\var)$, with  $\bind(\var^r)=\varState(v)$ for $\var\in\reads(t)$.

(2): The type of reasoning for the other direction is analogous.
\end{proof}

\medskip
Putting the theorems together, as depicted intuitively in Figure~\ref{fig:CPNfromDPN}, the following theorem holds, as the property of trace-equivalence between runs is clearly transitive.

\begin{theorem}
For any \net $\anet$, the corresponding \cpn $\acpnrep$ obtained with the construction so far described is trace-equivalent to $\anet$.
\label{theorem:anetrep_acpnrep_equiv}
\end{theorem}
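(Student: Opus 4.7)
The plan is to establish the result by chaining the two previous theorems via transitivity of trace-equivalence, using the intermediate object $\anetrep$ as a bridge. The figure in the excerpt (Figure~\ref{fig:approach}) already suggests this compositional strategy: rather than attempting a direct syntactic comparison between $\anet$ (a \net over the original, possibly infinite, domains) and $\acpnrep$ (a \cpn with bounded color sets, extra variable places $\xi(\var)$ and representative places $\rho(\var)$), one proceeds through the auxiliary \net $\anetrep$, obtained by restricting variable domains to the representatives $\Repres{\var}$.

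First, I would explicitly invoke Theorem~\ref{theorem:anet_anetrep_equiv}, which yields that $\anet$ and $\anetrep$ are trace-equivalent, i.e.\ for every legal run of $\anet$ there is a trace-equivalent legal run of $\anetrep$ (firing the same transitions in the same order, with bindings replaced by their representatives) and vice-versa. Next, I would invoke Theorem~\ref{th:equiv_dpn_cpn}, which yields that $\anetrep$ and the \cpn $\acpnrep$ are trace-equivalent in the extended sense (matching legal \net runs to legal \cpn runs firing the same transitions in the same order, modulo the state-to-marking correspondence $\leftrightsquigarrow$).

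Finally, I would observe that trace-equivalence, as defined via the coincidence of the transition sequences $t^1,\dots,t^n$ underlying matched runs, is transitive: if run $\tau_1$ of $\anet$ has the same transition sequence as run $\tau_2$ of $\anetrep$, and $\tau_2$ has the same transition sequence as run $\tau_3$ of $\acpnrep$, then $\tau_1$ and $\tau_3$ share that sequence as well. Combining the two existential matchings from the previous theorems therefore yields, for every legal run of $\anet$, a trace-equivalent legal run of $\acpnrep$, and conversely. This establishes trace-equivalence between $\anet$ and $\acpnrep$.

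The only delicate point — which I do not expect to be a real obstacle, but worth a remark — is that the definition of trace-equivalence crosses the \net/\cpn typing boundary only in Theorem~\ref{th:equiv_dpn_cpn}; hence I would briefly note that the extended definition of trace-equivalence (\net vs.\ \cpn) composes with the standard one (\net vs.\ \net) without any change, since in both cases the equivalence condition is phrased purely on the sequence of fired transitions and not on the details of the markings or bindings. With this remark, the proof reduces to a one-line application of transitivity to Theorems~\ref{theorem:anet_anetrep_equiv} and~\ref{th:equiv_dpn_cpn}.
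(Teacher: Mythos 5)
Your proposal is correct and follows essentially the same route as the paper, which likewise obtains the result by combining Theorem~\ref{theorem:anet_anetrep_equiv} and Theorem~\ref{th:equiv_dpn_cpn} and appealing to the transitivity of trace-equivalence (the paper's justification is precisely the one-liner ``the property of trace-equivalence between runs is clearly transitive''). Your additional remark that the \net-to-\net and \net-to-\cpn notions of trace-equivalence compose cleanly, because both are phrased solely in terms of the sequence of fired transitions, is a point the paper leaves implicit and is worth making explicit.
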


We thus turn to the question of determining whether the property of trace equivalence stated in the previous theorems allow one to transfer interesting properties. However, note that we cannot evaluate the same properties on both $\anet$ and $\acpnrep$ without a translation for \cpn{s}. The translation is intuitive although cumbersome, and it is left to the reader (an example is provided in the next section).

\begin{corollary}
The soundness properties in Definition~\ref{def:properties} hold in $\anet$ iff their translations hold in $\acpnrep$, hence $\anet$ is sound iff $\acpnrep$ is sound.
\end{corollary}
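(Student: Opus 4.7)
The plan is to reduce the corollary to the trace-equivalence statement of Theorem~\ref{theorem:anetrep_acpnrep_equiv}, promoting it from a statement about sequences of transitions to a statement about reachable states/markings, and then checking that each of the three soundness conditions P1, P2, P3 is preserved under the correspondence $\leftrightsquigarrow$.

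First I would make precise what the ``translations'' of P1, P2, P3 to $\acpnrep$ are. Using $\leftrightsquigarrow$, the initial state $(M_I,\varInit)$ corresponds to the initial marking $M_{Ic}:=\celem{I}$ of $\acpnrep$, and any final state $(M_F,\varState)$ corresponds to any \cpn marking $M_{Fc}$ whose restriction to places in $\delem{P}$ matches $M_F$ (with $M_{Fc}(\xi(\var))$ a singleton for every $\var$, and $M_{Fc}(\rho(\var)) = \uplus_{x\in\Repres{\var}}[x]$, which is invariant under firing). Under this dictionary, P1 translates to: every \cpn marking reachable from $M_{Ic}$ can reach some such $M_{Fc}$; P2 translates to: every reachable $M_c$ with $M_c(p)\geq M_{Fc}(p)$ on all control places satisfies $M_c\restriction \delem{P} = M_F$; P3 translates to: every transition of $\acpnrep$ is legally fireable in some reachable marking.

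The key step is the following lemma, which I would extract from the inductive proofs of Theorems~\ref{theorem:anet_anetrep_equiv} and~\ref{th:equiv_dpn_cpn}: if $(M_I,\varInit)\goto{\sigma}(M,\varState)$ in $\anet$ and $M_{Ic}\goto{\sigma_c}M_c$ in $\acpnrep$ are trace-equivalent runs (i.e., they fire the same sequence of transitions), then $(\restrict{M},\restrict{\varState})\leftrightsquigarrow M_c$. This follows because the induction in the proof of Theorem~\ref{th:equiv_dpn_cpn} is already state-by-state: at every inductive step, the constructed $(t,\gamma)$ agrees with $(t,\restrict{\bind})$ on read/written variables, so the updated marking and \svassignment on the \net side match the updated \cpn marking on places $\xi(\var)$ and $\rho(\var)$, while tokens on $\delem{P}$-places are updated identically. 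Combined with Theorem~\ref{theorem:anet_anetrep_equiv}, this gives a bijection (up to $\leftrightsquigarrow$) between reachable \net states and reachable \cpn markings.

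With this lemma in hand, each soundness property transfers directly. For P1: given a reachable state $(M,\varState)$ of $\anet$, the corresponding reachable $M_c$ of $\acpnrep$ must reach some $M_{Fc}$ (by the translated P1), and trace-equivalence then lifts the witnessing \cpn run back to a \net run reaching $(M_F,\varState')$; the converse direction is symmetric. For P2: a reachable $M$ in $\anet$ with $M\geq M_F$ corresponds to a reachable $M_c$ dominating $M_{Fc}$ on control places, so by translated P2 we get $M_c\restriction \delem{P}=M_F$ and thus $M=M_F$; again symmetric. For P3: a transition $t$ is fireable from some reachable \net state iff the same transition $t\in \celem{T}$ is fireable from the corresponding reachable \cpn marking, since the lemma guarantees that the enabling conditions (tokens in input places, legality of binding, satisfaction of guard) coincide between the two sides.

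The only subtle point, and the main obstacle I anticipate, is making sure the ``translation'' of the final marking and of the $\geq M_F$ condition in P2 handles the auxiliary places $\xi(\var)$ and $\rho(\var)$ in $\acpnrep$ correctly: these carry tokens in \emph{every} reachable \cpn marking (exactly one in $\xi(\var)$ and $|\Repres{\var}|$ in $\rho(\var)$), so they must be excluded from the comparison with $M_F$. Once the translation is stated as ``equality/inequality restricted to control places $p\in\delem{P}$'', everything goes through, and transitivity of trace-equivalence plus the lemma yield the corollary.
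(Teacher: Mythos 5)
Your proposal is correct and follows essentially the same route as the paper: reduce the corollary to the two trace-equivalence theorems, observe that their inductive proofs already give a step-by-step correspondence between reachable states of $\anet$ (via $\anetrep$) and reachable markings of $\acpnrep$, and then transfer P1--P3 through that correspondence. The paper's own argument is only a sketch that explicitly leaves the translation of the properties ``to the reader''; your treatment of that translation --- in particular restricting the $M\geq M_F$ comparison in P2 to the control places $p\in\delem{P}$ so that the always-populated places $\xi(\var)$ and $\rho(\var)$ do not interfere --- is exactly the detail the paper omits, and it is handled correctly.
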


The proof is quite intuitive, and relies on observations already used in the proofs above. By the proof of the Theorem~\ref{theorem:anet_anetrep_equiv}, it follows that the reachability graphs of $\anet$ and $\anetrep$ have the same branching structure with respect to transitions in $T$: at each step, any task that is enabled in $\anet$ in a given run fragment is also enabled in the corresponding trace-equivalent run in $\anetrep$. Similarly, by the proof of Theorem~\ref{theorem:anetrep_acpnrep_equiv}, trace-equivalent runs of $\anetrep$ and $\acpnrep$ traverse markings that are in correspondence, as defined earlier. Therefore, at every step,  every enabled transition in the $\acpnrep$ must be enabled in the corresponding state of the $\acpnrep$.

\begin{example}
Consider again the running example from Example~\ref{ex:example1} depicted in Figure~\ref{fig:net}. As already anticipated (see Example~\ref{ex:unsound}), the net is unsound as it is possible to reach a deadlock (e.g. having a token in $p5$ and $p6$). Note that the same happens in its resulting representative \cpn which uses representative values, shown in Figure~\ref{fig:CPNfromDPN}. The key point is that, while in the original \net $\anet$ there exists an infinite number of values (e.g. for the requested credit) for which a deadlock can be reached, all these are correctly represented by a finite number of runs in  $\acpnrep$.
\label{ex:running_ex_end}
\end{example}

\subsection{Relating Data-aware soundness and Decision-aware soundness}
\label{sec:data-vs-decision}

The previous result suggests that our technique is in fact not limited to data-aware soundness, but it can be applied to any property that does not rely on specific identity values of case variables (because $\acpnrep$ is insensible to the assignments of variables, as it uses representative values). We consider here the properties in \cite{BaHW17} which characterise various notions of decision-aware soundness.

\begin{corollary}
The truth value of all the properties in \cite{BaHW17} is the same in $\anet$ and $\acpnrep$.
\end{corollary}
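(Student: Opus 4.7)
The plan is to reduce the claim to Theorem~\ref{theorem:anetrep_acpnrep_equiv} by observing that each property in \cite{BaHW17} can be phrased purely in terms of the existence (or non-existence) of runs with certain transition sequences leading to certain markings, together with existential quantification over data assignments — exactly the kind of property preserved by trace-equivalence plus the representative-value abstraction. I would therefore first enumerate the properties studied in \cite{BaHW17} (the variants of decision-aware soundness, plus the notions of conditional completeness and conditional output coverage of the embedded decisions), and classify each as either (a) a reachability/termination property over the marking component of reachable states, or (b) a liveness/dead-transition property, exactly as already done in Definition~\ref{def:properties}.

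Next, for each property I would verify that its statement never mentions a concrete data value: it only asks whether \emph{some} binding $\bind$ exists under which a transition is enabled, or whether the final marking is reached for \emph{some} continuation. This matches the structure of P1--P3 and, crucially, is invariant under replacing a value $x$ by its representative $rep(x)$: by Theorem~\ref{theorem:anet_anetrep_equiv}, any run witnessing such an existential quantifier in $\anet$ has a trace-equivalent counterpart in $\anetrep$ (and vice versa) that traverses the same markings and fires the same transitions. Applying Theorem~\ref{th:equiv_dpn_cpn} then transfers the witness to $\acpnrep$, with the state-to-marking correspondence $(M,\varState)\leftrightsquigarrow M_c$ mapping the marking component one-to-one and the variable tokens in $\xi(\var)$ carrying the representative value.

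Given this setup, the argument for each property becomes a routine instantiation of the pattern already used in the proof of the previous corollary: if the property fails in $\anet$, pick a witnessing (finite or infinitely-continuable) run, push it through the trace-equivalences of Theorems~\ref{theorem:anet_anetrep_equiv} and~\ref{th:equiv_dpn_cpn} to obtain a witnessing run in $\acpnrep$, and conclude that the (translated) property fails there as well; the converse direction proceeds symmetrically, using that every run of $\acpnrep$ lifts to a legal run of $\anetrep$ and hence, by picking any value in the interval represented by $rep(\cdot)$, to a legal run of $\anet$.

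The main obstacle is the first step: \cite{BaHW17} states several properties in semi-informal terms and in a syntactic setting (BPMN plus DMN decision fragments), rather than directly over \net{s}. So the non-routine work is to exhibit, for each property in \cite{BaHW17}, an equivalent formulation over \net runs that manifestly uses only transition sequences, reachable markings, and existential quantification over bindings — after which trace-equivalence does all the work. Subtleties to watch for are properties that jointly constrain decision inputs and outputs (e.g.\ output coverage): these need to be phrased as "for every reachable state there \emph{exists} a binding of the decision transition that fires it with a given output class", which is again existential over bindings and therefore preserved by the representative abstraction, provided the constants appearing in the decision table are included in $\Consts{\var}$ (which they are, by construction).
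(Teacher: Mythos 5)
Your proposal matches the paper's own argument: the paper likewise first recasts each property of \cite{BaHW17} as a condition on legal runs and reachable states of a \net (properties $P4$, $P5$, and the relaxations $P1b$, $P2b$, several of which it observes are subsumed by $P1$), and then transfers truth values via the trace-equivalences of Theorems~\ref{theorem:anet_anetrep_equiv} and~\ref{th:equiv_dpn_cpn} together with the correspondence $(M,\varState)\leftrightsquigarrow M_c$, exemplifying the witness-pushing argument for $P4$ exactly as you describe. You also correctly identify that the only non-routine content is the reformulation of the semi-formal BPMN/DMN properties over \net runs, which is precisely where the paper spends its effort.
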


The rest of this section is devoted to support this claim. First, we now look at the decision-aware properties from \cite{BaHW17}, adapting them to our case, where DMN decisions are represented in a \net as transitions that are used to model the business rule task that is associated to the decision table (we consider here a \emph{unique} hit policy for decision tables, see Section~\ref{sec:dmn}).

\newcommand{\Decs}{Decs}

We start by addressing the two fundamental properties of conditional completeness and conditional output coverage. As shown in Section~\ref{sec:encoding}, a decision fragment is modelled in a \net as a set of transitions, as exemplified in Figure~\ref{fig:dmn-net}. We denote one such set as $\DecTrans$, and all these decisions as $\Decs$. This notation allows us to refer to all the transitions that belong to the same decision.

Also, for convenience of notation, given a \net $\anet$ we consider the set $Reach(\anet)$ defined as the set $\set{ (M,\varState)~|~ (M_I,\varState_I) \goto{*} (M,\varState) }$ for a given \net $\anet$, that is, the set of all  states reachable through \emph{legal} runs, which are intrinsically data-aware.
%Although the \net{s} obtained with the procedure illustrated in Section~\ref{sec:dmn-nets} (i.e. by making explicit the decision tables of given \net{s}) constitute a simpler case, the following is applicable to \emph{any} \net, and not only to those obtained by encoding  DMN decisions into normal \net{s} as in Section~\ref{sec:encoding}. Hence we do not make any assumption about how they were obtained.
%%

%Also, although the transitions encoding rules in decision tables as in Section~\ref{sec:dmn} can be thought as those transitions $t$ such that $\writes(t)\neq\emptyset$, for generality we simply denote the set of such transitions as $\DecTrans\subseteq T$.

\medskip
\noindent
\emph{Conditional completeness (P4).} As explained in Section~\ref{sec:dmn}, decision fragments are captured by transitions with guards encoding the rules of decision tables (see for instance Figure~\ref{fig:dmn-net}). A set of transitions, one for each rule in a table, is therefore said to be conditionally complete iff at least one can legally fire from any reachable state in which it is enabled my the marking. This is captured by the property $\forall \DecTrans \in \Decs, (M,\varState)\in Reach(\anet)$. $\exists t\in \DecTrans.$ $(M\goto{t}M' \Rightarrow \exists t'\in\DecTrans\bind, \varState'.$ $(M,\varState)\goto{t',\bind}(M',\varState'))$. The property simply states that from any reachable state from which transitions corresponding to a decision task are available, then for each decision there exists at least one transition from the same set that is legally executable. It however immediate to see that $P4$ is implied by $P1$, which prescribes that from any reachable state it is always possible to complete the process, irrespective of the \svassignment{s} and thus irrespective of the values of variables that transitions produce along the run.

%\paragraph{Conditional completeness (P4).} As explained in Section~\ref{sec:dmn}, decision fragments are captured by transitions with guards encoding the rules of decision tables (see for instance Figure~\ref{fig:dmn-net}). A transition $t\in \DecTrans$ is conditionally complete iff this property is satisfied: $\forall (M,\varState)\in Reach(\anet)$. $M\goto{t}M' \Rightarrow \exists\bind, M', \varState'.$ $(M,\varState)\goto{t,\bind}(M',\varState')$. Indeed, decision tables correspond to guards written as the disjunction of formulas, each encoding a row in the table (see Figure~\ref{fig:dmn-net}). Hence, to check that at least one row covers any reachable \svassignment, it is sufficient to check the existence of a legal transition firing $(t,\bind)$, i.e. that the guard is satisfiable for some binding $\bind$ (which must be consistent with $\varState$, namely $\bind(\var^r)=\varState(\var)$ for each $\var\in \reads(t)$).  Such requirement is not covered by $P1$ or $P3$: the former simply requires that there always \emph{exists} a way of reaching a final state, and the latter that all transition are not dead. The property is  true in $\anet$ iff it is true for all $t\in \DecTrans$.

\medskip
\noindent
\emph{Conditional  output coverage (P5).} The property holds for a decision table when all outputs are covered by  conditions in the succeeding gateway. In our formalism, this correspond to check that for every possible legal transition firing $(t,\bind)$, from the last state of a legal run fragment, we have that the resulting state is not a dead-end. Hence: $\forall (M,\varState)\in Reach(\anet)$. ($\exists \bind,\varState'$. $(M,\varState)\goto{t,\bind}(M',\varState') \land M'\neq M_F) \Rightarrow \exists (M'',\varState''). (M',\varState')\goto{*}(M'',\varState'')$.  The property is  true in $\anet$ iff it is true for all $t\in \DecTrans$. Again, note that this property is implied by $P1$.

%\paragraph{Conditional  output coverage (P5).} The property holds for a decision table when all outputs are covered by  conditions in the succeeding gateway. Therefore, in our formalism, a transition $t\in \DecTrans$ is conditionally output covered iff for every possible legal transition firing $(t,\bind)$, from the last state of a legal run fragment, we have that the resulting state is not a dead-end. Hence: $\forall (M,\varState)\in Reach(\anet)$. ($\exists \bind,\varState'$. $(M,\varState)\goto{t,\bind}(M',\varState') \land M'\neq M_F) \Rightarrow \exists (M'',\varState''). (M',\varState')\goto{*}(M'',\varState'')$.  The property is  true in $\anet$ iff it is true for all $t\in \DecTrans$.

\medskip
\noindent
\emph{State-Based Decision Deadlock Freedom.} It requires that  every transition $t\in \DecTrans$, such that $(t,\bind)$ is legal from the last state of a legal run fragment, is conditionally complete and all its outputs are conditionally covered. Hence, the property corresponds to checking $P4$ and $P5$.

\medskip
\noindent
\emph{State-based Dead Branch Absence.} It requires that along any branch in each decision fragment there is an output of the decision model such that this branch is selected for execution. As we represent a transitions both the business rule task and the following conditions, this is a relaxation of $P3$, limited to those transitions that follow one in $\DecTrans$.

\medskip
\noindent
\emph{Decision-Aware Soundness.} A \net is decision-aware sound iff it is classically sound, it is decision deadlock free and state-based dead branch free. Note that, under such conditions, the three properties of classical soundness \cite{Batoulis2017} are guaranteed also with respect to the possible decision-aware paths of the process. In our setting, this  corresponds to $P1$-$P3$ together with $P4$ and $P5$. %\paolo{D-A Soundness is Classical soundness + P4 + P5, which is how they enforce with bruteforce the soundness they have in mind. This corresponds to P1-P5, which for us is too much, but we should prove it}.
However, recall that $P1$ implies $P4$ and $P5$, and hence this definition is equivalent to data-aware soundness (Definition~\ref{def:properties}).

%if all the transitions of a \net corresponding to decisions are conditionally complete ($P4$), then output coverage is guaranteed by  $P1$. Indeed, any possible output (here corresponding to combinations of assignments of written variables -- see Section~\ref{sec:dmn}) are also covered in any succeeding guarded transition or otherwise we would reach a dead-end and violate $P1$. Checking $P5$ is thus not necessary. As a result, Definition~\ref{def:properties} is less strict that this one. The reason is clear: we do not need transitions corresponding to DMN decisions to be complete ($P4$) to assess the soundness of a process, because we already replaced the notion of classical soundness with its data-aware counterpart, and defined the execution semantics accordingly.

\medskip
\noindent
\emph{Decision-Aware Relaxed Soundness.} It requires relaxed soundness, state-based decision deadlock freedom and state-based dead branch absence. The first specifies that  every transition can participate in at least one sound
firing sequence, which is captured by $P3$ and in out setting already implies state-based dead branch absence. The property thus corresponds to $P3$, $P4$, $P5$.

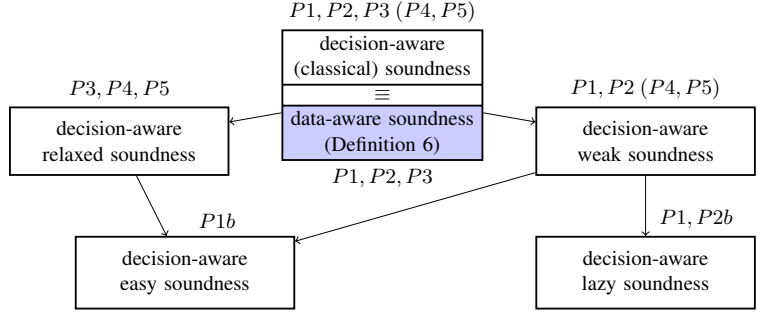
\begin{figure}[t!]
\centering
\resizebox{.7\textwidth}{!}{
% !TEX root = ../../main.tex

\begin{tikzpicture}[node distance = 4cm]

\tikzstyle{place}=[rectangle,thick,draw=black,  inner sep=6pt]
\tikzstyle{rsplit}=[thick,draw=black, rectangle split,
       rectangle split parts=3,
       draw=black, 
       minimum height=2em,
       text width=3cm,
       rectangle split part fill={white!30,white!30,blue!20},
       inner sep=2pt,
       text centered]

\node[rsplit] (0) [text width=2.9cm,align=center, label={90:$P1,P2,P3$ $(P4, P5)$}, label={270:$P1,P2,P3$}] {decision-aware\\ (classical) soundness \nodepart{two} $\equiv$ \nodepart{three} data-aware soundness  (Definition~\ref{def:properties})};
%\node[place, below of=0, yshift=2cm] (1) [text width=3cm,align=center,  label={80:$P1,P2,P3$}] {\textbf{data-aware soundness}\\ (Definition~\ref{def:properties})};
\node[place, left of=0] (2) [yshift=-.7cm, text width=2.9cm,align=center, label={90:$P3, P4, P5$}] {decision-aware\\ relaxed soundness};
\node[place, right of=0] (3) [yshift=-.7cm, text width=2.9cm,align=center,  label={90:$P1,P2$ $(P4, P5)$}] {decision-aware\\ weak soundness};
\node[place, below of=2, xshift=1cm,yshift=2cm,  label={80:$P1b$}] (4) [text width=2.9cm,align=center] {decision-aware\\ easy soundness};
\node[place, below of=3, yshift=2cm,  label={80:$P1,P2b$}] (5) [text width=2.9cm,align=center] {decision-aware\\ lazy soundness};

%\draw[->] (0) edge (1);
\draw[->] (0) edge (2);
\draw[->] (0) edge (3);
\draw[->] (2) edge (4);
\draw[->] (3) edge (5);
\draw[->] (3) edge (4);
%\draw[->] (1) edge (2);
%\draw[->] (1) edge (5);
 
\end{tikzpicture}
}
  \caption{Relationship between data-aware soundness and the various notions of decision-aware
soundness, as illustrated in \cite{BaHW17}. The figure also shows to which properties these correspond, once translated for \net{s}.%It is indeed immediate to see that $P3$ implies $P1b$ and $P2$ implies $P2b$, but $P3$ does not imply $P4$ or vice-versa.
}
\label{fig:soundnesses}
\end{figure}

\medskip
\noindent
\emph{Decision-Aware Weak Soundness.} It requires classical weak soundness, which is less strict than decision-aware soundness as it allows dead transitions (although any transition that can fire must always lead to a proper termination) and state-based decision deadlock freedom. As weak soundness is captured by $P1$ and $P2$, the property corresponds to checking $P1$, $P2$, $P4$ ($P5$ is implied).
%As Again, recall that if all the transitions of a \net corresponding to decisions are conditionally complete ($P4$), then output coverage is guaranteed by  $P1$.

\medskip
\noindent
\emph{Decision-Aware Lazy Soundness.} It allows the net to be lazy in the sense that there can be tokens left in the net after a token appeared on the final place (i.e. violating $P2$). When considering our execution semantics, it requires that from every (legally) reachable state the end transition (i.e., leading to the output place $p_{out}$ of $\anet$) can be reached exactly once. We can express this as
%the conjunction of two simple relaxations of $P1$ and $P2$, namely $\forall (M,\varState)\in Reach(\anet)$. $\exists M',\varState'$. $(M,\varState) \goto{*} (M',\varState') \land M' \geq M_F \land M'(P_{out})=1$ with the additional requirement that $\forall (M,\varState)\in Reach(\anet)$. $M(out)\leq 1$.
the conjunction of $P1$ with a simple relaxation of $P2$, namely $P2b:$ $\forall (M,\varState)\in Reach(\anet)$. $M(p_{out})\leq 1$.

\medskip
\noindent
\emph{Decision-Aware Easy Soundness.} It requires classical easy soundness and that at least one run (hence trace) to be state-based decision deadlock free and state-based dead branch free. We provide here the most reasonable interpretation of this definition. Since easy soundness simply requires that for at least one token on the initial place a token will eventually appear  on the final place, the requirement is captured by checking the existence of at least one run in which a relaxation of $P1$ holds ($P4$ and $P5$ are guaranteed by the legality of the selected trace). $P1b$: $\exists \sigma.$ $(M_I,\varInit) \goto{\sigma} (M_F,\varState)$.

\medskip
With arguments very similar to those applied for the proof of Theorem~\ref{theorem:anetrep_acpnrep_equiv}, it can be easily seen that these properties hold in $\anet$ iff they hold in $\acpnrep$ (once rewritten for \cpn{s}), due to trace-equivalence. Indeed, they are all based on legal runs and legally reachable states,  which  are all accounted for in $\acpnrep$. We exemplify such reasoning for $P4$, as the argument is akin to those for  $P1b$, $P2b$ and $P5$.

\begin{proof} (Sketch.)
 First, we express the corresponding requirement $P4_c$ on \cpn{s} for any set of transitions $\DecTrans$ as $\forall M_c.$ $(M_{Ic}\goto{*} M_c$ $\land$ $\exists t\in\DecTrans, \varState.$ $M_c\leftrightsquigarrow (M,\varState) \land M\goto{t}M') \Rightarrow  (\exists t'\in\DecTrans, \gamma, \varState'. $ $M_c\goto{t',\gamma} M'_c \land M'_c\leftrightsquigarrow (M',\varState'))$.
 Assume that $P4$ is satisfied in $\anet$ for every set $\DecTrans$. Then for every run $\tau=(M_I,\varState_I) \goto{\sigma} (M^n,\varState^n)$ in $\anetrep$ with $M^n\goto{t} M'$ for some $M'$ and $t\in\DecTrans$ we have $(M^n,\varState^n)\goto{t',\bind}(M',\varState')$ for some $t'\in\DecTrans$ and $\bind$, which agrees with $\varState^n$ by construction. Since they are trace-equivalent, this implies that there exists a trace $\tau_c=M_{Ic} \goto{*} M^n_c$  in $\acpnrep$ with $(M^i,\varState^i)\leftrightsquigarrow M^i_c$ for each $i\in[1,\card{\sigma}]$, which implies that for any place $p\in P$ we have $M^i(p)=\card{M^i_c(p)}$ and for any $\var\in\Vars$ we have $\varState^i(\var)=M^i_c(\xi(\var))$. Hence it must also be that $t'$ is enabled in $\acpnrep$, namely $M^n_c\goto{t',\gamma} M'_c$ for some $M'_c$ and $\gamma$ or otherwise the trace-equivalence would be violated. Therefore $P4_c$ is  satisfied for $\DecTrans$. A similar argument holds for the other direction.
\end{proof}

\bigskip
We conclude by noting that, shown in Figure~\ref{fig:soundnesses}, the same relationships between all the notions  presented in \cite{BaHW17} are correctly reconstructed, as expected. In particular, as shown in the figure, it turns out that our notion of data-aware soundness for \net{s} is equivalent to that of decision-aware soundness for processes associated with DMN decision models. That is, if we either (a) check decision-aware soundness on processes with DMN decisions or (b) translate them into a \net{s} and then check data-aware soundness, then we get the same result. This supports the claim that data-aware soundness in indeed suitable for capturing the soundness of Petri net-based process models enriched with case data and decisions. It requires the \emph{existence of at least one way} of completing a process  through the execution of a legal sequence of transition firings, \emph{in any possible case} that is allowed by both the control-flow and the data-flow of the process. This is a very strong property: it is guaranteed \emph{irrespective} of the values that may be written along the execution.

At the same time, note that data-aware soundness does not rely on the specific structure of the process: while decision-aware soundness, as studied in \cite{BaHW17}, assumes that decisions are represented as \emph{decision fragments}, thus fixing a specific shape of the process, data-aware soundness is evaluated on generic \net{s}. This allows one to consider processes with arbitrary conditions on the data they manipulate, which makes our approach more general. Moreover, the execution semantics defined here is itself data-aware, as opposed to imposing additional requirements on the decision tables, in addition to checking classical soundness.

\section{Implementation And Experiments}
\label{sec:implementation}

\begin{figure}[t!]
  \centering
  \includegraphics[width=\textwidth]{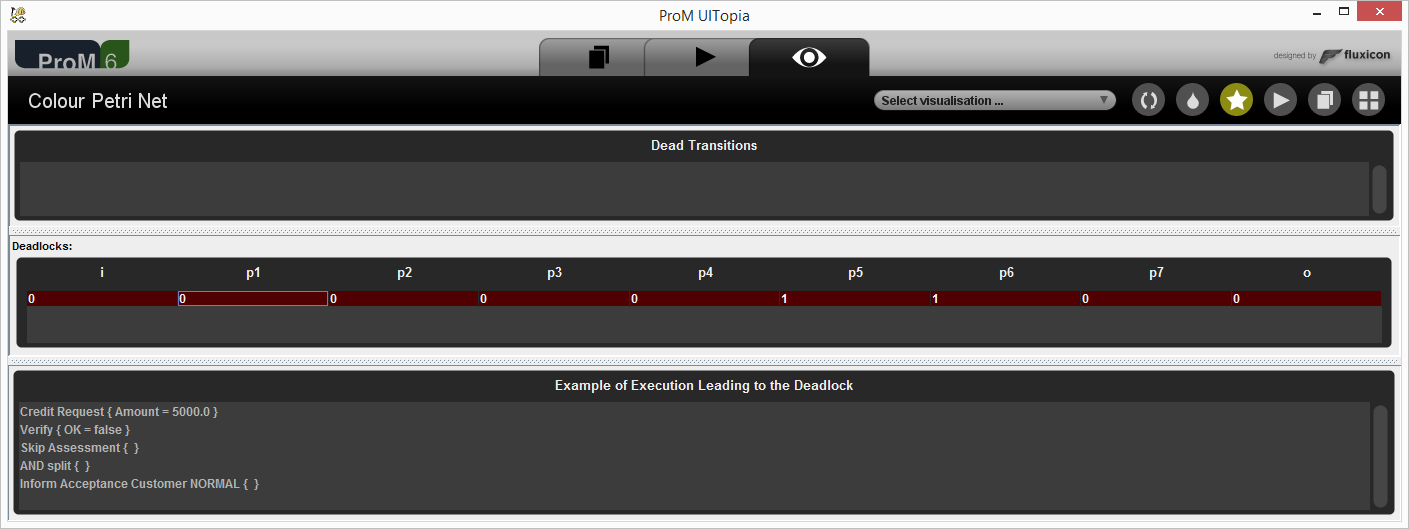}
  \caption{Screenshot of the tool that implements the soundness-checking technique described in this paper. The screenshot refers to the working example in Figure~\ref{fig:net}.}
  \label{fig:screenshotWE}
\end{figure}

\begin{figure}[t!]
  \centering
  \includegraphics[width=\textwidth]{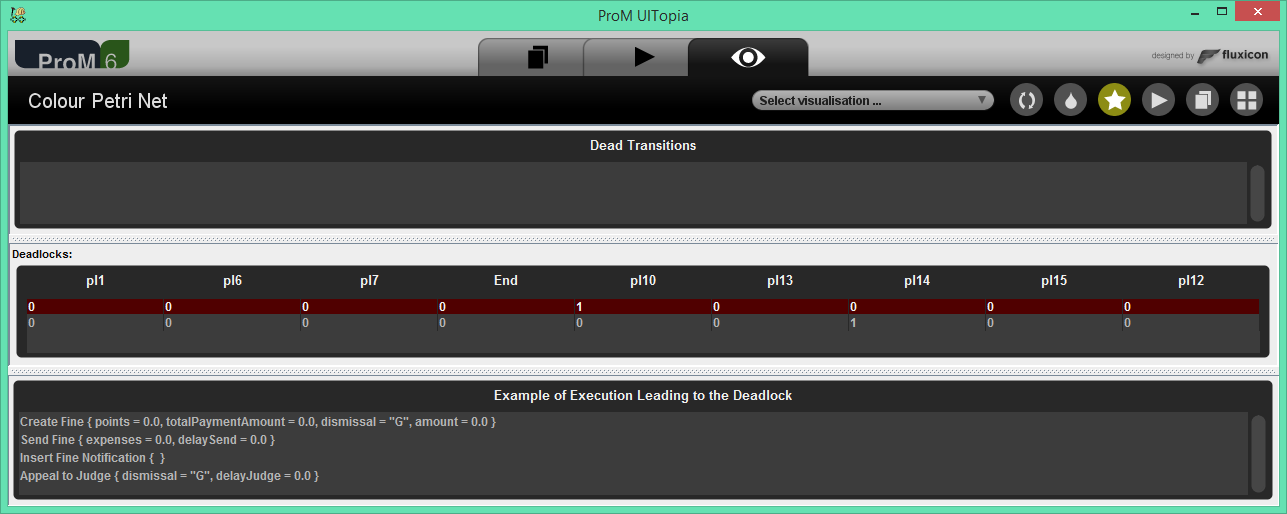}
  \caption{Screenshot of the tool that implements the soundness-checking technique described in this paper. The screenshot refers to the feedback on a data-aware model of a real-life process for road-traffic fine management.}
  \label{fig:screenshot}
\end{figure}

Our soundness-checking technique has been implemented as a Java plug-ins for ProM, an established open-source framework for implementing process mining algorithms and tools (see \url{\texttt{http://www.promtools.org/}}), which supports both the PNML and the BPMN file formats to load process models in those two formats.
ProM also implements numerous algorithms for discovering process models that integrate the decision perspective (e.g.~\cite{Leoni.2013b}). Thanks to this,  we can employ our technique to validate the soundness of models where the decision perspective is mined from event data and models can be expressed in the two mentioned notations.
In particular, the soundness-checking technique is available in the ProM \textit{nightly} build after ensuring that the ProM package \emph{DataPetriNets} is installed. The plug-in is named \emph{Compute Soundness of a Data Petri Net} and takes a \net as input.
Figure~\ref{fig:screenshotWE} refers to the output for the working example in Figure~\ref{fig:net}. The output illustrates the list of dead transitions as well as the undesired deadlocks, namely the list of markings in which no transitions are enabled although they are not final. For the working example the only undesired deadlock is the marking with one token in $p5$ and $p6$. Clicking on the deadlock, at the bottom the plug-in shows an example of execution that leads to that marking, namely when $Amount$ is 5000 and $Verification$ is \false. 

We performed a number of experiments with data-aware models of real-life processes that were used in previous publications and theses:
\begin{enumerate}
  \item We used the model of the real-life process for the management of road-traffic fines, which is illustrated in Figures 7 and 8 of~\cite{Mannhardt2016}. Space limitation prevents us from showing the models here. Figure~\ref{fig:screenshot} shows the feedback screen the ProM plug-in for soundness checking: no transition is dead and two deadlock markings can be identified. By clicking on any of the deadlock, an example of execution that leads to that deadlock is shown at the bottom. By inspecting the model, one can easily observe that the deadlock at the top (i.e.\ with a token in place $pl10$) is caused by the fact that transition \emph{Appeal to Judge} can assign any value to variable \emph{dismissal}. This transition is followed by a XOR split where two alternative transitions are possible, depending on the value of variable \emph{dismissal}: \texttt{NIL} or \texttt{\#}. However, the model does not impede \emph{Appeal to Judge} to assign other values, e.g.\ \texttt{G}, thereby causing a deadlock.
  \item We checked the soundness of the data-aware models reported in Figures 13.6 and 15.6 of the Ph.D.\ thesis by Mannhardt~\cite{FelixPhD}. Both models refer to processes that are executed within hospitals: the former is about curing patients with sepsis and the latter manages the hospital billing to patients. These models were partly hand designed and partly mined through process-discovery techniques.
\begin{figure}[p]
  \centering
  \includegraphics[height=\textheight]{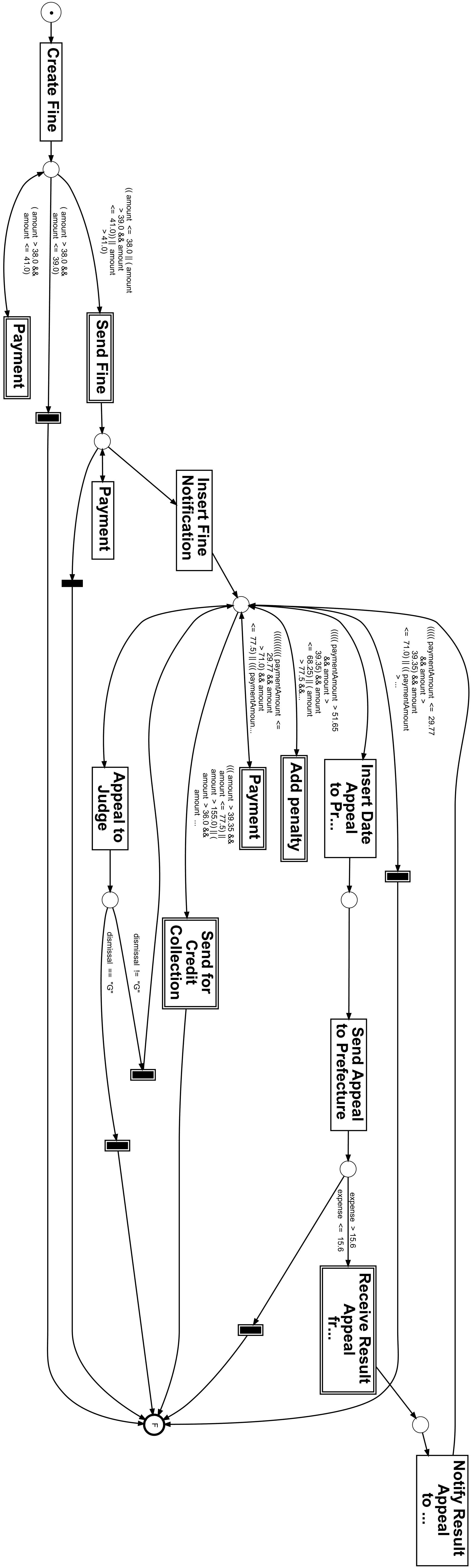}
  \caption{A data-aware Petri Net that models a road-traffic fine management process. The structure of the transitions and places of the net is manually designed and the guards are discovered through decision-mining techniques. The read and write operations are not shown and long guards are partly cut in order to maintain some readability.}
  \label{fig:discoveredModel}
\end{figure}

  \item We use the same model as at point 1 but, instead of keeping in the pre-existing guards, we employed the guard discovery technique discussed in~\cite{Mannhardt.2016b}, which does not formally guarantee that the discovered models comply the properties of Definition~\ref{def:properties}. The resulting model is in Figure~\ref{fig:discoveredModel}, which is data-aware sound. The analysis has not indeed reported dead transitions or deadlocks.
\end{enumerate}

The models at points 1 and 2 were analysed for deadlocks and dead transitions in a matter of seconds. The model at point 3 required 1.9 hours to return the analysis results. This difference is due to the fact that the model at point 3 is likely over-precise for what concerns the decisions. Therefore, the decisions are modelled through complex guards with several atoms; as a consequence, the search space to visit grows significantly.

%\input{discussion-technical} 
% !TEX root = main.tex

\section{Conclusions}
\label{sec:conclusion}

In this paper we have introduced a holistic, formal and operational approach to verify the end-to-end soundness of \netname{s}, which we called data-aware soundness. Thanks to the solid formal foundation of \net{s}, we defined a notion of soundness for these nets to incorporate the decision perspective, and developed a technique for assessing such property that can be directly implemented on existing tools. We also characterised how our definition of data-aware soundness is related to known notions of decision-aware soundness in the literature.
In future work, we plan to address more sophisticated guard languages than the one considered in this paper, for instance by allowing to compare variables through guards such as $(\var_1^w \geq \var_2^r \land \var_1^w \neq \var_3^w)$. Note however that this goes beyond DMN S-FEEL and thus requires more sophisticated encoding techniques, although we believe this to be a decidable setting.
Further, we aim at extending our results to other data domains. This is a quite delicate task, since even minimal extensions may lead to undecidability. For instance, by enriching integer domains by a successor predicate, we immediately get an undecidability result for soundness, even in the simple case of \net{s}  with two case variables.
%(this follows by reduction from the halting problem of 2-counter Minsky machines).
%%
Finally, we  also have some intriguing ideas on how to optimize the technique presented in this paper. In its current form, nondeterminism is managed \emph{eagerly}, that is, by generated branches for possible values as soon as a variable is written. It appears instead promising to manage nondeterminism \emph{lazily}, i.e., by postponing such choice to the moment where the variable actually appears in a guard, hence considering sets of possible representatives at the same time. This would not preserve trace-equivalence, but could still preserve soundness.

\bibliographystyle{abbrv}
\bibliography{main-bib}

\end{document}